\newtheorem{theorem}{Theorem}[section]
\newtheorem{corollary}{Corollary}[section]
\newtheorem{proposition}{Proposition}[section]
\newtheorem{lemma}{Lemma}[section]
\theoremstyle{remark}
\newtheorem*{remark}{Remark}
\theoremstyle{definition}
\newtheorem{definition}{Definition}[section]
\begin{document}
	\count\footins = 1000
%
\title{A Feedback Capacity-Achieving Coding Scheme for the $(d,\infty)$-RLL Input-Constrained Binary Erasure Channel}
%
%
%



\author{V.~Arvind Rameshwar \ and \ Navin Kashyap
	\thanks{The authors are with the Department of Electrical Communication Engineering, Indian Institute of Science, Bengaluru 560012. Email: \{\texttt{vrameshwar}, \texttt{nkashyap}\}\texttt{@iisc.ac.in}}
	\thanks{This work was supported in part by a Qualcomm Innovation Fellowship India 2020. The work of V.~A.~Rameshwar was supported by a Prime Minister's Research Fellowship, from the Ministry of Education, Govt. of India. The work of N.~Kashyap was supported in part by MATRICS grant \ MTR/2017/000368 from the Science and Engineering Research Board (SERB), Govt. of India.}
}
\IEEEoverridecommandlockouts
%



\maketitle

\begin{abstract}
This paper considers the memoryless input-constrained binary erasure channel (BEC). The channel input constraint is the $(d,\infty)$-runlength limited (RLL) constraint, which mandates that any pair of successive $1$s in the input sequence be separated by at least $d$ $0$s. 
We consider a scenario where there is causal, noiseless feedback from the decoder. We demonstrate a simple, labelling-based, zero-error feedback coding scheme, which we prove to be feedback capacity-achieving, and, as a by-product, obtain an explicit characterization of the feedback capacity. Our proof is based on showing that the rate of our feedback coding scheme equals an upper bound on the feedback capacity derived using the single-letter bounding techniques of Sabag et al. (2017). 
Further, we note using the tools of Thangaraj (2017) that there is a gap between the feedback and non-feedback capacities of the $(d,\infty)$-RLL input constrained BEC, {at least for $d=1,2$}.

\end{abstract}

\begin{IEEEkeywords}
Feedback capacity, capacity without feedback, constrained coding, binary erasure channel, Markov decision processes, runlength limited (RLL) constraints, Reed-Muller codes.
\end{IEEEkeywords}


\section{Introduction}
\label{sec:intro}
In most data {storage systems}, constrained coding is used to prohibit the appearance of certain input sequences that are more error-prone than others. Here, arbitrary user data sequences are encoded into sequences that respect the constraint (see, for example, \cite{Roth} or \cite{Immink}). In this work, we consider the problem of designing good binary constrained codes for use over a binary erasure channel (BEC), with erasure probability $\epsilon$ (see Figure 1), {when the channel output is fed back noiselessly to the transmitter}.


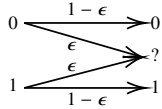
\begin{figure}[!h]
					\label{fig:bec}
	\centering
		\resizebox{0.13\textwidth}{!}{

			\tikzset{every picture/.style={line width=0.75pt}} 
			
			\begin{tikzpicture}[x=0.75pt,y=0.75pt,yscale=-1,xscale=1]
				
				\draw    (454.24,136.67) -- (521.85,136.67) ;
				\draw [shift={(523.85,136.67)}, rotate = 180] [color={rgb, 255:red, 0; green, 0; blue, 0 }  ][line width=0.75]    (10.93,-3.29) .. controls (6.95,-1.4) and (3.31,-0.3) .. (0,0) .. controls (3.31,0.3) and (6.95,1.4) .. (10.93,3.29)   ;
				\draw    (454.24,136.67) -- (519.94,155.67) ;
				\draw [shift={(521.86,156.22)}, rotate = 196.13] [color={rgb, 255:red, 0; green, 0; blue, 0 }  ][line width=0.75]    (10.93,-3.29) .. controls (6.95,-1.4) and (3.31,-0.3) .. (0,0) .. controls (3.31,0.3) and (6.95,1.4) .. (10.93,3.29)   ;
				\draw    (454.24,173.79) -- (521.85,173.79) ;
				\draw [shift={(523.85,173.79)}, rotate = 180] [color={rgb, 255:red, 0; green, 0; blue, 0 }  ][line width=0.75]    (10.93,-3.29) .. controls (6.95,-1.4) and (3.31,-0.3) .. (0,0) .. controls (3.31,0.3) and (6.95,1.4) .. (10.93,3.29)   ;
				\draw    (454.24,173.79) -- (519.93,156.72) ;
				\draw [shift={(521.86,156.22)}, rotate = 165.43] [color={rgb, 255:red, 0; green, 0; blue, 0 }  ][line width=0.75]    (10.93,-3.29) .. controls (6.95,-1.4) and (3.31,-0.3) .. (0,0) .. controls (3.31,0.3) and (6.95,1.4) .. (10.93,3.29)   ;
				
				\draw (476.48,146) node [anchor=north west][inner sep=0.75pt]  [font=\footnotesize]  {$\epsilon $};
				\draw (476.48,158) node [anchor=north west][inner sep=0.75pt]  [font=\footnotesize]  {$\epsilon $};
				\draw (477.13,123.4) node [anchor=north west][inner sep=0.75pt]  [font=\footnotesize]  {$1-\epsilon $};
				\draw (477.13,175) node [anchor=north west][inner sep=0.75pt]  [font=\footnotesize]  {$1-\epsilon $};
				\draw (443.86,131.32) node [anchor=north west][inner sep=0.75pt]  [font=\footnotesize]  {${\displaystyle 0}$};
				\draw (443.86,166.96) node [anchor=north west][inner sep=0.75pt]  [font=\footnotesize]  {$1$};
				\draw (525.01,131.32) node [anchor=north west][inner sep=0.75pt]  [font=\footnotesize]  {$0$};
				\draw (525.01,149.88) node [anchor=north west][inner sep=0.75pt]  [font=\footnotesize]  {$?$};
				\draw (525.01,167.7) node [anchor=north west][inner sep=0.75pt]  [font=\footnotesize]  {$1$};

			\end{tikzpicture}

		}

		\caption{The binary erasure channel (BEC) with erasure probability $\epsilon$.}
		
\end{figure}

Our focus is on the $(d,\infty)$-RLL input constraint, which mandates that there are at least $d$ $0$s between every pair of successive $1$s in the input sequence. Figure 2 shows a state transition graph that represents the constraint. The $(d,\infty)$-RLL constraint is a special case of the $(d,k)$-RLL constraint, which admits only binary sequences with at least $d$ and at most $k$ $0$s between successive $1$s. 
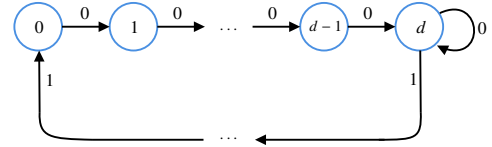
\begin{figure}[!h]
	
	\begin{center}
		\resizebox{0.35\textwidth}{!}{
			\tikzset{every picture/.style={line width=0.75pt}} 
			
			\begin{tikzpicture}[x=0.75pt,y=0.75pt,yscale=-1,xscale=1]
				
				
				\draw  [color={rgb, 255:red, 74; green, 144; blue, 226 }  ,draw opacity=1 ][line width=1.5]  (130,96) .. controls (130,82.19) and (141.19,71) .. (155,71) .. controls (168.81,71) and (180,82.19) .. (180,96) .. controls (180,109.81) and (168.81,121) .. (155,121) .. controls (141.19,121) and (130,109.81) .. (130,96) -- cycle ;
				\draw  [color={rgb, 255:red, 74; green, 144; blue, 226 }  ,draw opacity=1 ][line width=1.5]  (230,96) .. controls (230,82.19) and (241.19,71) .. (255,71) .. controls (268.81,71) and (280,82.19) .. (280,96) .. controls (280,109.81) and (268.81,121) .. (255,121) .. controls (241.19,121) and (230,109.81) .. (230,96) -- cycle ;
				\draw  [color={rgb, 255:red, 74; green, 144; blue, 226 }  ,draw opacity=1 ][line width=1.5]  (430,96) .. controls (430,82.19) and (441.19,71) .. (455,71) .. controls (468.81,71) and (480,82.19) .. (480,96) .. controls (480,109.81) and (468.81,121) .. (455,121) .. controls (441.19,121) and (430,109.81) .. (430,96) -- cycle ;
				\draw  [color={rgb, 255:red, 74; green, 144; blue, 226 }  ,draw opacity=1 ][line width=1.5]  (530,96) .. controls (530,82.19) and (541.19,71) .. (555,71) .. controls (568.81,71) and (580,82.19) .. (580,96) .. controls (580,109.81) and (568.81,121) .. (555,121) .. controls (541.19,121) and (530,109.81) .. (530,96) -- cycle ;
				\draw [line width=1.5]    (180,96) -- (226,96) ;
				\draw [shift={(230,96)}, rotate = 180] [fill={rgb, 255:red, 0; green, 0; blue, 0 }  ][line width=0.08]  [draw opacity=0] (11.61,-5.58) -- (0,0) -- (11.61,5.58) -- cycle    ;
				\draw [line width=1.5]    (280,96) -- (326,96) ;
				\draw [shift={(330,96)}, rotate = 180] [fill={rgb, 255:red, 0; green, 0; blue, 0 }  ][line width=0.08]  [draw opacity=0] (11.61,-5.58) -- (0,0) -- (11.61,5.58) -- cycle    ;
				\draw [line width=1.5]    (380,96) -- (426,96) ;
				\draw [shift={(430,96)}, rotate = 180] [fill={rgb, 255:red, 0; green, 0; blue, 0 }  ][line width=0.08]  [draw opacity=0] (11.61,-5.58) -- (0,0) -- (11.61,5.58) -- cycle    ;
				\draw [line width=1.5]    (480,96) -- (526,96) ;
				\draw [shift={(530,96)}, rotate = 180] [fill={rgb, 255:red, 0; green, 0; blue, 0 }  ][line width=0.08]  [draw opacity=0] (11.61,-5.58) -- (0,0) -- (11.61,5.58) -- cycle    ;
				\draw [line width=1.5]    (556,121) -- (556.5,185.73) ;
				\draw [line width=1.5]    (156.03,125) -- (156.5,185.73) ;
				\draw [shift={(156,121)}, rotate = 89.56] [fill={rgb, 255:red, 0; green, 0; blue, 0 }  ][line width=0.08]  [draw opacity=0] (11.61,-5.58) -- (0,0) -- (11.61,5.58) -- cycle    ;
				\draw [line width=1.5]    (156.5,185.73) .. controls (155.5,212.73) and (167.5,213.73) .. (211.5,214.73) ;
				\draw [line width=1.5]    (556.5,185.73) .. controls (557.5,213.73) and (551.5,214.73) .. (512.5,214.73) ;
				\draw [line width=1.5]    (211.5,214.73) -- (328.5,214.73) ;
				\draw [line width=1.5]    (386.5,214.73) -- (518.5,214.73) ;
				\draw [shift={(382.5,214.73)}, rotate = 0] [fill={rgb, 255:red, 0; green, 0; blue, 0 }  ][line width=0.08]  [draw opacity=0] (11.61,-5.58) -- (0,0) -- (11.61,5.58) -- cycle    ;
				\draw [line width=1.5]    (576.5,81.73) .. controls (622.56,60.17) and (626.36,138.5) .. (576.61,117.17) ;
				\draw [shift={(573.5,115.73)}, rotate = 386.13] [fill={rgb, 255:red, 0; green, 0; blue, 0 }  ][line width=0.08]  [draw opacity=0] (11.61,-5.58) -- (0,0) -- (11.61,5.58) -- cycle    ;
				
				\draw (344,96) node [anchor=north west][inner sep=0.75pt]  [font=\large]  {$\dotsc $};
				\draw (344,210.7) node [anchor=north west][inner sep=0.75pt]  [font=\large]  {$\dotsc $};
				\draw (150,88.4) node [anchor=north west][inner sep=0.75pt] [font=\Large]   {$0$};
				\draw (250,88.4) node [anchor=north west][inner sep=0.75pt]   [font=\Large] {$1$};
				\draw (438,88.4) node [anchor=north west][inner sep=0.75pt]   [font=\large] {$d-1$};
				\draw (550,88.4) node [anchor=north west][inner sep=0.75pt]  [font=\Large]  {$d$};
				\draw (198,73.4) node [anchor=north west][inner sep=0.75pt]  [font=\Large]  {$0$};
				\draw (295,73.4) node [anchor=north west][inner sep=0.75pt]  [font=\Large]  {$0$};
				\draw (395,73.4) node [anchor=north west][inner sep=0.75pt]  [font=\Large]  {$0$};
				\draw (495,73.4) node [anchor=north west][inner sep=0.75pt]  [font=\Large]  {$0$};
				\draw (615,89.4) node [anchor=north west][inner sep=0.75pt] [font=\Large]   {$0$};
				\draw (543,143.4) node [anchor=north west][inner sep=0.75pt]  [font=\Large]  {$1$};
				\draw (161,144.4) node [anchor=north west][inner sep=0.75pt] [font=\Large]   {$1$};

			\end{tikzpicture}
		}
	\end{center}
	\label{fig:d_inf}
	\caption{The state transition graph for the $(d,\infty)$-RLL constraint.}
\end{figure}
The system model under investigation in this paper is the classical discrete memoryless channel (DMC) (introduced by Shannon in \cite{Sh48}) in the presence of causal, noiseless feedback, with inputs that are constrained to obey the $(d,\infty)$-RLL constraint. Input-constrained DMCs fall under the broad class of discrete finite-state channels (DFSCs, or FSCs). Figure 3 shows a generic constrained DMC with decoder feedback. It is well-known from Shannon's work in \cite{Sh56} that feedback does not increase the capacity of an unconstrained DMC, i.e., $C_{\text{DMC}}^{\text{fb}} = C_{\text{DMC}} = \sup_{P(x)} I(X;Y)$. However, Shannon's arguments do not apply to the case of an input-constrained DMC, and we shall employ special tools to determine the feedback capacity in this setting.


\begin{figure}[!h]
	
	\centering
		\resizebox{0.5\textwidth}{!}{

			\tikzset{every picture/.style={line width=0.75pt}} 
			
			\begin{tikzpicture}[x=0.75pt,y=0.75pt,yscale=-1,xscale=1]
				
				\draw   (341,126.32) -- (411,126.32) -- (411,194.32) -- (341,194.32) -- cycle ;
				\draw    (410.5,157.65) -- (458.5,157.65) ;
				\draw [shift={(460.5,157.65)}, rotate = 180] [color={rgb, 255:red, 0; green, 0; blue, 0 }  ][line width=0.75]    (10.93,-3.29) .. controls (6.95,-1.4) and (3.31,-0.3) .. (0,0) .. controls (3.31,0.3) and (6.95,1.4) .. (10.93,3.29)   ;
				\draw   (461,125) -- (531,125) -- (531,200) -- (461,200) -- cycle ;
				\draw   (581,125.06) -- (651,125.06) -- (651,193.06) -- (581,193.06) -- cycle ;
				\draw    (530.5,157.65) -- (578.5,157.65) ;
				\draw [shift={(580.5,157.65)}, rotate = 180] [color={rgb, 255:red, 0; green, 0; blue, 0 }  ][line width=0.75]    (10.93,-3.29) .. controls (6.95,-1.4) and (3.31,-0.3) .. (0,0) .. controls (3.31,0.3) and (6.95,1.4) .. (10.93,3.29)   ;
				\draw    (650.5,157.65) -- (698.5,157.65) ;
				\draw [shift={(700.5,157.65)}, rotate = 180] [color={rgb, 255:red, 0; green, 0; blue, 0 }  ][line width=0.75]    (10.93,-3.29) .. controls (6.95,-1.4) and (3.31,-0.3) .. (0,0) .. controls (3.31,0.3) and (6.95,1.4) .. (10.93,3.29)   ;
				\draw    (270.5,156.65) -- (338.5,156.65) ;
				\draw [shift={(340.5,156.65)}, rotate = 180] [color={rgb, 255:red, 0; green, 0; blue, 0 }  ][line width=0.75]    (10.93,-3.29) .. controls (6.95,-1.4) and (3.31,-0.3) .. (0,0) .. controls (3.31,0.3) and (6.95,1.4) .. (10.93,3.29)   ;
				
				\draw    (619.5,193.04) -- (619.5,245.04) ;
				\draw    (542.5,245.04) -- (619.5,245.04) ;
				\draw [shift={(540.5,245.04)}, rotate = 0] [color={rgb, 255:red, 0; green, 0; blue, 0 }  ][line width=0.75]    (10.93,-3.29) .. controls (6.95,-1.4) and (3.31,-0.3) .. (0,0) .. controls (3.31,0.3) and (6.95,1.4) .. (10.93,3.29)   ;
				\draw   (469,225.04) -- (539,225.04) -- (539,265.04) -- (469,265.04) -- cycle ;
				
				\draw    (375.5,195.04) -- (375.5,245.04) ;
				\draw [shift={(375.5,193.04)}, rotate = 90] [color={rgb, 255:red, 0; green, 0; blue, 0 }  ][line width=0.75]    (10.93,-3.29) .. controls (6.95,-1.4) and (3.31,-0.3) .. (0,0) .. controls (3.31,0.3) and (6.95,1.4) .. (10.93,3.29)   ;
				\draw    (375.5,245.04) -- (469.5,245.04) ;
				
				\draw (481,171.01) node [anchor=north west][inner sep=0.75pt]  [font=\normalsize]  {$P_{Y|X}$};
				\draw (669,143.01) node [anchor=north west][inner sep=0.75pt]  [font=\normalsize]  {$\hat{m}$};
				\draw (549,143.01) node [anchor=north west][inner sep=0.75pt]  [font=\normalsize]  {$y_{i}$};
				\draw (429,143.01) node [anchor=north west][inner sep=0.75pt]  [font=\normalsize]  {$x_{i}$};
				\draw (594,152.61) node [anchor=north west][inner sep=0.75pt]  [font=\normalsize] [align=left] {Decoder};
				\draw (342,145.65) node [anchor=north west][inner sep=0.75pt]   [align=left] {{\normalsize Constrained}\\{\normalsize  \ \ Encoder}};
				\draw (479,143) node [anchor=north west][inner sep=0.75pt]   [align=left] {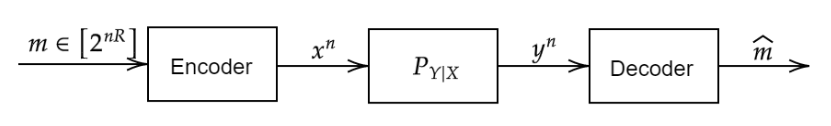};
				\draw (275,137.01) node [anchor=north west][inner sep=0.75pt]  [font=\normalsize]  {${\displaystyle m\in \left[ 2^{nR}\right]}$};
				\draw (420,228.4) node [anchor=north west][inner sep=0.75pt]  [font=\normalsize]  {$y_{i-1}$};
				\draw (570,228.4) node [anchor=north west][inner sep=0.75pt]  [font=\normalsize]  {$y_{i}$};
				\draw (476,240) node [anchor=north west][inner sep=0.75pt]  [font=\small] [align=left] {Unit Delay};

			\end{tikzpicture}
}
	\label{fig:genconstdmcfb}
\caption{The system model of a generic input-constrained DMC with causal, noiseless feedback.}
\end{figure}

In this work, we provide a simple, labelling-based, zero-error feedback coding scheme for the $(d,\infty)$-RLL input-constrained BEC, similar to the those presented in \cite{SabBEC} and \cite{0k}, for other input-constrained BECs.
We then prove that our feedback coding scheme is in fact feedback capacity-achieving. Our method uses the single-letter bounding techniques in \cite{Single} to obtain an upper bound on feedback capacity, which we show to be equal to the rate of our proposed coding scheme. As a result, we are able to explicitly characterize the feedback capacity of the $(d,\infty)$-RLL input-constrained BEC, which is given by a $(d+1)$-parameter optimization problem:
\[
C^{\text{fb}}_{(d,\infty)}(\epsilon) = \max_{\substack{{\delta_0,\ldots,\delta_d\geq 0}\\ {\sum_{i=0}^{d} \delta_i\leq 1}}} \frac{\bar{\epsilon}\left(\sum\limits_{i=0}^{d}\epsilon^ih_b(\delta_i)\right)}{\sum\limits_{i=0}^{d}\epsilon^i + d\bar{\epsilon}\left(\sum\limits_{i=0}^{d}\epsilon^i\delta_i\right)}.
\]
Our formula generalizes the coding scheme in \cite{SabBEC}, where the feedback capacity of the $(1,\infty)$-RLL input-constrained BEC was derived using dynamic programming techniques. Our work supplements the results in \cite{0k}, which provided the feedback capacity of the $(0,k)$-RLL input-constrained BEC. However, interestingly, unlike the previous two results, the feedback capacity of the $(d,\infty)$-RLL input-constrained BEC, for general $d$, does not equal the capacity when the encoder possesses non-causal knowledge of erasures (this observation was also made in \cite{0k}).

The remainder of the paper is organized as follows: Section \ref{sec:notation} introduces the notation and refreshes some preliminary background. Section \ref{sec:main} states our main results. The optimal feedback coding scheme and an upper bound on the feedback capacity that is equal to the rate of the coding scheme, are discussed in Section \ref{sec:fbcode}. The feedback capacity is thus explicitly derived in this section. 
Finally, Section \ref{sec:conclusion} contains concluding remarks and a discussion on possible future work.

\section{Notation and Preliminaries}
\label{sec:notation}
\subsection{Notation}

Random variables will be denoted by capital letters, and their realizations by lower-case letters, e.g., $X$ and $x$, respectively. Calligraphic letters, e.g., $\mathscr{X}$, denote sets. The notation $[n]$ denotes the set, $\{1,2,\ldots,n\}$, of integers.
The notations $X^{N}$ and $x^N$ denote the random vector $(X_1,\ldots,X_N)$ and the realization $(x_1,\ldots,x_N)$, respectively. Further, $P(x), P(y)$ and $P(y|x)$ are used to denote the probabilities $P_X(x), P_Y(y)$ and $P_{Y|X}(y|x)$, respectively. Also, the notations $H(X)$ (or $H(P_X)$) and $I(X;Y)$ stand for the entropy of $X$, and the mutual information between $X$ and $Y$, respectively, and $h_b(p):=-p\log_2 p - (1-p)\log_2(1-p)$ is the binary entropy function, for $p\in [0,1]$. 
Finally, for ease of reading, we define $\bar{\alpha}=1-\alpha$. 
All logarithms are to the base $2$.

\subsection{Problem Definition}

The communication setting of an input-constrained memoryless channel with causal, noiseless feedback is shown in Figure 3. A message $M$ is drawn uniformly from the set $\{1,2,\ldots,2^{nR}\}$, and is made available to the constrained encoder. The constrained encoder at time $i$ has, in addition to the message, access to noiseless feedback in the form of the outputs, $y^{i-1}$, from the decoder. It then produces a binary input symbol $x_i \in \{0,1\}$, as a function of the specific instance of the message, $m$, and the  outputs, $y^{i-1}$, in such a manner that the $(d,\infty)$-RLL constraint (see Figure 2) is respected. We set the channel state alphabet, $\mathscr{S}$, to be $\{0,1,\ldots,d\}$. The channel is memoryless in the sense that $P(y_i|x^{i},y^{i-1}) = P(y_i|x_i)$, for all $i$.

\begin{definition}
	An $(n,2^{nR},(d,\infty))$ \emph{feedback} code for an input-constrained channel  is defined by the encoding functions:
	\begin{equation*}
		f_i: \{1,\ldots, 2^{nR}\}\times \mathscr{Y}^{i-1} \rightarrow \mathscr{X}, \quad i\in [n],
	\end{equation*}
	which satisfy $f_i(m,y^{i-1}) = 0$, if $f_{(i-j)^+}(m, y^{(i-j-1)^+}) = 1$ (where $x^+$ is equal to   $\max\{x,0\}$), for some $j\in [d]$, and a decoding function $\Gamma: \mathscr{Y}^n \rightarrow \{1,\ldots,2^{nR}\}$.
	
The average probability of error for a code is defined as $P_e^{(n)} = P(M\neq \Psi(Y^n))$. A rate $R$ is said to be $(d,\infty)$-feedback achievable if there exists a sequence of $(n,2^{nR},(d,\infty))$ feedback codes, such that $\lim_{n\rightarrow \infty} P_e^{(n)} = 0$. The feedback capacity, $C_{(d,\infty)}^{\text{fb}}(\epsilon)$, is defined to be the supremum over all $(d,\infty)$-achievable rates, and is a function of the erasure probability, $\epsilon$.
\end{definition}

Our focus is on the binary erasure channel, or the BEC. Here, the input alphabet, $\mathscr{X} = \{0,1\}$, while the output alphabet is $\mathscr{Y} = \{0,?,1\}$, where $?$ denotes an erasure. Let $\epsilon \in [0,1]$ be the erasure probability of the channel.

\subsection{$Q$-graphs and $(S,Q)$-graphs}
\label{sec:Q-graph}
We now recall the definitions of the $Q$-graph and the $(S,Q)$-graph introduced in \cite{Single}.
\begin{definition}
	A $Q$-graph is a finite irreducible labelled directed graph on a vertex set $\mathscr{Q}$, with the property that each $q\in \mathscr{Q}$ has at most $|\mathscr{Y}|$ outgoing edges, each labelled by a unique $y\in \mathscr{Y}$.
\end{definition}

Thus, there exists a function $\Phi: \mathscr{Q} \times \mathscr{Y} \rightarrow \mathscr{Q}$, such that $\Phi(q,y)=q'$ if, and only if, there is an edge $q \stackrel{y}{\longrightarrow} q'$ in the $\mathscr{Q}$-graph.
Figure \ref{figQ} depicts a sample $\mathscr{Q}$-graph.

\begin{figure}[htbp]
	\centering
	\includegraphics[scale=0.6]{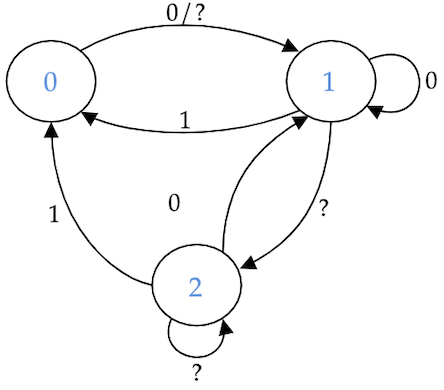}
	\caption{A sample $\mathscr{Q}$-graph.  The edge labels represent outputs.}
	\label{figQ}
\end{figure}
\begin{definition}
	Given an input-constrained DMC specified by $\{P(y|x)\}$ and the states of the presentation of the input constraint of which obey $s_t=f(s_{t-1},x_t)$, and a $\mathscr{Q}$-graph with vertex set $\mathscr{Q}$, the $(S,Q)$-graph is defined to be a directed graph on the vertex set $\mathscr{S}\times \mathscr{Q}$, with edges $(s,q) \xrightarrow{(x,y)} (s^{\prime},q^{\prime})$ if, and only if, $P(y|x)>0$, $s^{\prime}=f(s,x)$, and $q^{\prime}=\Phi(q,y)$.
	\label{def:SQ_graph}
\end{definition}

Given an input distribution $\{P(x|s,q)\}$ defined for each $(s,q)$ in the $(\mathscr{S},\mathscr{Q})$-graph, we have a Markov chain on $\mathscr{S}\times \mathscr{Q}$, where the transition probability associated with any edge $(x,y)$ emanating from $(s,q)\in \mathscr{S}\times \mathscr{Q}$ is $P(y|x)P(x|s,q)$. Let $\mathscr{G}(\{P(x|s,q)\})$ be the subgraph remaining after discarding edges of zero probability. We then define 

\begin{align*}
\Omega \triangleq \bigl\lbrace \{P(x | s,q)\}: \mathscr{G}& (\{P(x|s,q)\})\text{ has a single} \\ 
& \text{closed communicating class}\bigr\rbrace.
\end{align*}
An input distribution $\{P(x|s,q)\} \in \Omega$ is said to be \emph{aperiodic}, if the corresponding graph, $\mathscr{G}(\{P(x|s,q)\})$, is aperiodic. For such distributions, the Markov chain on $\mathscr{S}\times \mathscr{Q}$ has a unique stationary distribution $\pi(s,q)$.

\subsection{Bounds on Feedback Capacity}
We shall make use of the following single-letter upper bound on feedback capacity (specialized to input-constrained DMCs) \cite{Single}. The theorem assumes that the state transition graph corresponding to the input constraint is irreducible, and that the encoder and the decoder know the initial channel state, $s_0$.



\begin{theorem}[\cite{Single}, Theorem 2]\label{thm:UB}
	The feedback capacity, $C^{\text{fb}}_{DMC}$, of an input-constrained DMC, when the state transition graph of the input-constraint is irreducible, is upper bounded as
	\begin{equation*}
	C^{\text{fb}}_{DMC} \leq \sup_{P(x|s,q)\in \Omega} I(X;Y|Q),
	\end{equation*}
	for all irreducible $Q$-graphs with $q_0$ such that $(s_0,q_0)$ lies in an aperiodic closed communicating class.
\end{theorem}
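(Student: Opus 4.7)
The plan is to begin from the standard directed-information characterization of the feedback capacity of a finite-state channel with known initial state, namely
\[
C^{\text{fb}}_{DMC} \;=\; \lim_{n\to\infty} \frac{1}{n}\sup \sum_{i=1}^n I(X^i;Y_i\mid Y^{i-1}),
\]
where the supremum is over causally conditioned input distributions respecting the $(d,\infty)$-RLL constraint. Channel memorylessness, $P(y_i\mid x^i,y^{i-1}) = P(y_i\mid x_i)$, lets me cancel $H(Y_i\mid X^i,Y^{i-1}) = H(Y_i\mid X_i)$ in each term, reducing $I(X^i;Y_i\mid Y^{i-1})$ to $H(Y_i\mid Y^{i-1}) - H(Y_i\mid X_i)$.

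The central step is to introduce the $Q$-graph as a finite-state summary of the output history. Since $q_0$ is fixed and $Q_i = \Phi(Q_{i-1},Y_{i-1})$ is a deterministic function of $y^{i-1}$, conditioning on $Q_i$ carries no more information than conditioning on $Y^{i-1}$, so $H(Y_i\mid Y^{i-1}) \leq H(Y_i\mid Q_i)$. Together with memorylessness (which yields $H(Y_i\mid X_i) = H(Y_i\mid X_i,Q_i)$), this produces the per-letter bound
\[
I(X^i;Y_i\mid Y^{i-1}) \;\leq\; I(X_i;Y_i\mid Q_i).
\]
Because $S_i$ is a deterministic function of $X^{i-1}$ (known to the encoder) and $Q_i$ is a deterministic function of $Y^{i-1}$ (known to both ends via feedback), any causal encoding policy can be re-expressed, at the level of the joint law of $(X_i,Y_i,S_i,Q_i)$, through a time-varying kernel $P_i(x\mid s,q)$, under which $(S_i,Q_i)$ evolves as a Markov chain on $\mathscr{S}\times \mathscr{Q}$ with transition probabilities $P_i(x\mid s,q)\,P(y\mid x)$ followed by the deterministic updates $s'=f(s,x)$ and $q'=\Phi(q,y)$.

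The remaining, and hardest, step is to collapse the Cesaro average $\frac{1}{n}\sum_i I(X_i;Y_i\mid Q_i)$ into a single-letter quantity computed under a stationary distribution on $\mathscr{S}\times\mathscr{Q}$. I would exploit concavity of $I(X;Y\mid Q)$ in the input kernel (for a fixed marginal on the conditioning variable) to replace the sequence $\{P_i\}$ with its time-averaged, time-invariant kernel without decreasing the bound. The main obstacle is then ensuring that this stationary policy induces a well-defined limiting empirical distribution $\pi(s,q)$ that is insensitive to the initialization: this is exactly where the theorem's hypothesis is used, since restricting to $\{P(x\mid s,q)\}\in \Omega$ with $(s_0,q_0)$ lying in an aperiodic closed communicating class rules out transient and periodic behaviour and pins down a unique invariant $\pi$ under which $I(X;Y\mid Q)$ is well-defined. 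A standard ergodic/continuity argument then delivers $C^{\text{fb}}_{DMC} \leq \sup_{P(x\mid s,q)\in\Omega} I(X;Y\mid Q)$, while policies lying outside $\Omega$ or inducing periodicity can be handled by a perturbation argument, since they are approximable arbitrarily closely from within $\Omega$.
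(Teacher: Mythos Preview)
The paper does not prove this theorem at all: it is quoted verbatim as Theorem~2 of \cite{Single} and used as a black-box tool for deriving the upper bound in Lemma~\ref{lem:ub}. There is therefore no ``paper's own proof'' against which to compare your proposal.

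That said, your sketch follows the expected architecture of such a proof (directed-information characterization, reduction of $I(X^i;Y_i\mid Y^{i-1})$ to $I(X_i;Y_i\mid Q_i)$ via the fact that $Q_i$ is a deterministic function of $Y^{i-1}$, then a stationarization/ergodicity argument to pass to a single-letter expression under $\pi(s,q)$). The one place where your outline is genuinely soft is the ``collapse'' step: appealing to concavity of $I(X;Y\mid Q)$ in the kernel for a \emph{fixed} marginal on $Q$ is not quite enough, because the marginal on $(S,Q)$ is itself a function of the policy through the Markov dynamics, and the dependence of $\pi(s,q)$ on $\{P(x\mid s,q)\}$ is not concave. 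The actual argument in \cite{Single} handles this more carefully, and the aperiodicity/closed-communicating-class hypothesis is used precisely to make the limiting $\pi$ exist and be continuous in the policy. Your perturbation remark at the end is the right instinct for dealing with boundary policies, but would need to be made rigorous.
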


\section{Main Results}
\label{sec:main}
\subsection{Capacity With Feedback}
The following theorem states our main result concerning the capacity of the $(d,\infty)$-RLL input-constrained BEC with feedback. For $\vec{\delta} = (\delta_0,\ldots,\delta_d)$, with $\delta_i\in [0,1], \ \forall i$, we define the function, $R(\vec{\delta})$, where $\vec{\delta}\in [0,1]^{d+1}$, to be
\begin{equation}
	\label{eq:R}
R(\vec{\delta}):= \frac{\bar{\epsilon}\left(\sum\limits_{i=0}^{d}\epsilon^ih_b(\delta_i)\right)}{\sum\limits_{i=0}^{d}\epsilon^i + d\bar{\epsilon}\left(\sum\limits_{i=0}^{d}\epsilon^i\delta_i\right)}.
\end{equation}
We let $\Delta_d$ denote the $d$-dimensional unit simplex, i.e., $\Delta_d:=\{\vec{\delta}\in [0,1]^{d+1}:\sum_{i=0}^d \delta_i\leq 1\}$.
\begin{theorem} \label{thm:main}
	For $\epsilon \in [0,1]$, the feedback capacity of the $(d,\infty)$-RLL input-constrained BEC is given by
	\begin{equation}
		\label{eq:fbcap}
	C_{(d,\infty)}^{\text{fb}}(\epsilon) = \max_{\vec{\delta}\in \Delta_d} R(\vec{\delta}),
	\end{equation}
	and is achievable by a zero-error feedback coding scheme.
\end{theorem}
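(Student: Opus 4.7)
The proof plan is to establish Theorem~\ref{thm:main} by matching two bounds at $R(\vec{\delta})$: an upper bound derived via Theorem~\ref{thm:UB}, and the rate of an explicit zero-error feedback scheme parametrized by $\vec{\delta}\in\Delta_d$. The parameters $\delta_0,\ldots,\delta_d$ will serve a double role throughout the argument: on the converse side as a reparametrization of an input distribution on a carefully chosen $\mathscr{Q}$-graph, and on the achievability side as the conditional probabilities with which the encoder attempts to transmit $1$s in successive phases following erasure bursts.

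For the converse, I would take the $\mathscr{Q}$-graph on the $d+1$ vertices $\{0,1,\ldots,d\}$ in which each vertex $q$ has edges labelled $0$ and $1$ returning to vertex $0$, and an edge labelled $?$ leading to $\min(q+1,d)$; intuitively, $q$ records the length of the current erasure burst at the output, saturated at $d$. With this choice, the marginal $Q$-chain is autonomously Markov with $\bar\epsilon$-and-$\epsilon$ transition probabilities, and its stationary distribution is the geometric-like $\pi(q)=\bar\epsilon\,\epsilon^{q}$ for $q<d$ and $\pi(d)=\epsilon^d$. The $(S,Q)$-graph on $\{0,\ldots,d\}^2$ has the property that the input constraint forces $P(X=0\mid s,q)=1$ whenever $s<d$, leaving only the biases $\alpha_q := P(X=1\mid s=d,\, q)$ as free parameters. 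I would then solve the $(S,Q)$-balance equations for the joint stationary distribution, expand $I(X;Y\mid Q) = \sum_q \pi(q)\,\bar\epsilon\, h_b\bigl(P(X=1\mid Q=q)\bigr)$ using $H(Y\mid X) = h_b(\epsilon)$, and introduce the reparametrization suggested by the joint stationary masses ($\delta_i$ proportional to $\pi(X=1,Q=i)$, with the $d$ forced zeros after each successful $1$ absorbed into a renormalization). The feasibility constraints on these stationary joint masses are exactly those defining $\Delta_d$, and the resulting expression collapses algebraically to $R(\vec{\delta})$ in \eqref{eq:R}, giving $C_{(d,\infty)}^{\text{fb}}(\epsilon)\leq \max_{\vec{\delta}\in\Delta_d} R(\vec{\delta})$.

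For achievability, I would construct a labelling-based zero-error scheme generalizing those of \cite{SabBEC} and \cite{0k}. The message $m$ is identified with a point in $[0,1]$, and both encoder and decoder maintain a common posterior interval, updated from the observed output sequence using their shared knowledge of the joint $(S,Q)$-state. When the joint state is $(s=d,\,q)$, the encoder partitions its current interval into two subintervals of relative widths $\delta_q$ and $1-\delta_q$, and transmits the bit indicating which subinterval contains $m$; when $s<d$, it sends the forced $0$. On any non-erasure output, the decoder replicates the partition and shrinks the interval in the same way; on an erasure, both parties simply update $q$ and continue. A renewal-reward argument over return times of the $(S,Q)$-chain to the reference state $(s,q)=(d,0)$ (reached immediately after a successfully received $1$ followed by $d$ forced $0$s) shows that the expected information generated per cycle equals the numerator of $R(\vec{\delta})$, while the expected number of channel uses per cycle equals the denominator; hence the per-channel-use rate is exactly $R(\vec{\delta})$. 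Zero-error decoding follows by declaring the unique message consistent with the observed output sequence, which succeeds in finite expected time by ergodicity of the $(S,Q)$-chain.

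The principal obstacle I anticipate is the algebraic step in the converse: cleanly solving the $(S,Q)$-balance equations for $\pi(s,q)$ and exhibiting the collapse of $I(X;Y\mid Q)$ to the compact two-sum ratio $R(\vec{\delta})$, with the geometric weights $\epsilon^{i}$ in the numerator emerging from the $Q$-chain stationary distribution and the $d\bar\epsilon\sum_{i}\epsilon^{i}\delta_i$-correction in the denominator capturing the $d$ forced zeros following each successfully received $1$. A secondary subtlety is verifying that the reparametrized variables exactly sweep the simplex $\Delta_d$; once achievability produces rate $R(\vec{\delta})$ for every $\vec{\delta}\in\Delta_d$, however, this exhaustion is automatic, and no further optimization over $\mathscr{Q}$-graphs is required to conclude the matching bounds in \eqref{eq:fbcap}.
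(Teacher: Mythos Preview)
Your converse uses the wrong $\mathscr{Q}$-graph. The $(d+1)$-vertex graph you propose---in which both output symbols $0$ and $1$ return to vertex $0$---is too coarse to yield a tight bound. Because the BEC satisfies $P(Y=\,?\mid X)=\epsilon$ regardless of $X$, your $Q$-chain is indeed autonomously Markov with the geometric stationary law you wrote down, but this is exactly the problem: the resulting bound $I(X;Y\mid Q)=\bar{\epsilon}\sum_q\pi(q)\,h_b\bigl(P(X{=}1\mid Q{=}q)\bigr)$ does \emph{not} collapse to $R(\vec{\delta})$ under any reparametrization, and is in fact loose. For a concrete check at $d=1$, take $\alpha_0=\alpha_1=1$ (always send $1$ when $S=1$); the $(S,Q)$-chain then alternates $S=0,1,0,1,\ldots$, giving $P(X{=}1\mid Q{=}q)=\tfrac12$ for both $q$ and hence $I(X;Y\mid Q)=\bar{\epsilon}$, the \emph{unconstrained} BEC capacity, which strictly exceeds $C^{\text{fb}}_{(1,\infty)}(\epsilon)$. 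What is missing is that the $\mathscr{Q}$-graph must distinguish $Y=1$ from $Y=0$: after a received $1$ the next $d$ inputs are forced $0$s and carry no information, and the bound must ``see'' this. The paper therefore uses a $(2d{+}1)$-vertex $\mathscr{Q}$-graph: the states $Q_0,\ldots,Q_d$ (cycled modulo $d{+}1$ on erasures, not saturated) together with $d$ forced-zero states $\hat Q_0,\ldots,\hat Q_{d-1}$ entered upon $Y=1$. With that graph the $Q$-marginal is \emph{not} autonomous---its stationary distribution depends on the input law through the $\delta_i$'s---and this is precisely what generates the $d\bar{\epsilon}\sum_i\epsilon^i\delta_i$ correction in the denominator of $R(\vec{\delta})$.

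Your achievability side has a related gap. You let the partition fraction depend on the joint $(S,Q)$-state, but $S$ is not a function of $Y^{i-1}$ alone: after an erasure the decoder cannot tell whether $X=0$ or $X=1$ was sent, so different messages in the current posterior set can sit at different $S$-states simultaneously, and the ``common interval partitioned by a single $\delta_q$'' picture breaks down. The paper avoids this by making the labelling depend on $Y^{i-1}$ only (hence common to encoder and decoder) and by enforcing the constraint during erasure bursts \emph{structurally}: the $1$-labelled subintervals in $\mathcal{L}_0,\ldots,\mathcal{L}_d$ are disjoint, stacked as $[0,\delta_0),[\delta_0,\delta_0{+}\delta_1),\ldots$, so that if the message lies in the $1$-region of $\mathcal{L}_j$ it automatically lies in the $0$-region of the next $d$ labellings in the cycle. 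This is why $\sum_i\delta_i\le 1$ appears and why the erasure transition wraps modulo $d{+}1$ rather than saturating at $d$.
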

\begin{remark}
	At $\epsilon=0$, the capacities with and without feedback are identical, and are given by $C_{(d,\infty)}(0) = \max\limits_{\delta \in [0,1]} \frac{h_b(\delta)}{d\delta + 1}$, the noiseless capacity of the $(d,\infty)$-RLL input constraint.
\end{remark}
\begin{remark}
	Since from operational considerations, the zero-error feedback capacity,  $C_{(d,\infty)}^{\text{ze}}$, is less than or equal to the feedback capacity, $C^{\text{fb}}_{(d,\infty)}$, Theorem \ref{thm:main} also shows that the two feedback capacities are indeed equal, i.e., $C^{\text{ze}}_{(d,\infty)}=C^{\text{fb}}_{(d,\infty)}$.
\end{remark}
%
Theorem \ref{thm:main} follows from the construction of a feedback coding scheme in Section \ref{sec:fbcode}, whose rate equals an upper bound on the feedback capacity computed using the single-letter bounding technique in \cite{Single}. The proof is presented in Section \ref{sec:fbcode}.

The expression for the feedback capacity provided in Theorem \ref{thm:main} admits the following simplification:

\begin{proposition}
	\label{prop:simplify}
The vector $\vec{\delta}^{\star}:=(\delta^*_0,\ldots,\delta^*_d)$ that attains the maximum in \eqref{eq:fbcap} is such that either $\vec{\delta}^{\star}$ is in the interior of $\Delta_d$, or that $\sum_{i=0}^{d} \delta_i = 1$. Further, in the first case, it holds that for any $\epsilon \in [0,1]$:
\[
C_{(d,\infty)}^{\text{fb}}(\epsilon) = \max_{\delta\in [0,\frac{1}{d+1}]} \frac{h_b(\delta)}{d\delta + \frac{1}{1-\epsilon}}.
\]
\end{proposition}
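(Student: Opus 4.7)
The plan is to attack the two assertions of the proposition in turn, relying on elementary first-order optimality combined with the fact that the binary entropy has an infinite-slope cusp at $0$.

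\textbf{Assertion 1 (boundary elimination).} I would argue by contradiction. Write $N(\vec{\delta}) := \sum_{i=0}^d \epsilon^i h_b(\delta_i)$ and $D(\vec{\delta}) := \sum_{i=0}^d \epsilon^i + d\bar{\epsilon} \sum_{i=0}^d \epsilon^i \delta_i$, so that $R = \bar{\epsilon}\, N/D$. Suppose some maximizer $\vec{\delta}^\star$ satisfies $\delta_j^\star = 0$ for some $j$ while $\sum_i \delta_i^\star < 1$. For small $\eta > 0$, the vector $\vec{\delta}^\star + \eta e_j$ still lies in $\Delta_d$, and a common-denominator computation using $h_b(0)=0$ gives
\[
R(\vec{\delta}^\star + \eta e_j) - R(\vec{\delta}^\star) \;=\; \bar{\epsilon}\, \epsilon^j \cdot \frac{h_b(\eta)\,D(\vec{\delta}^\star) - d\bar{\epsilon}\, \eta\, N(\vec{\delta}^\star)}{D(\vec{\delta}^\star)\bigl(D(\vec{\delta}^\star) + d\bar{\epsilon}\, \epsilon^j \eta\bigr)}.
\]
Since $h_b(\eta)/\eta = -\log_2 \eta + O(1) \to +\infty$ as $\eta \downarrow 0$, the numerator is strictly positive for all sufficiently small $\eta$, so for $\epsilon \in (0,1)$ the perturbation strictly increases $R$, contradicting optimality. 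The edge cases $\epsilon \in \{0,1\}$ are immediate: at $\epsilon = 1$ the objective is identically $0$, and at $\epsilon = 0$ it depends only on $\delta_0$, so an interior diagonal maximizer always exists.

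\textbf{Assertion 2 (reduction to a one-dimensional optimization).} Assume next that $\vec{\delta}^\star$ lies in the interior of $\Delta_d$ and take $\epsilon \in (0,1)$. Setting $\partial R/\partial \delta_j = 0$ and cancelling the strictly positive factor $\bar{\epsilon}\,\epsilon^j/D(\vec{\delta}^\star)^2$ yields the stationarity system
\[
h_b'(\delta_j^\star) \;=\; \frac{d\bar{\epsilon}\, N(\vec{\delta}^\star)}{D(\vec{\delta}^\star)}, \qquad j = 0,\dots,d,
\]
whose right-hand side is independent of $j$. Strict monotonicity of $h_b'$ on $(0,1)$ then forces all $\delta_j^\star$ to coincide with a common value $\delta^\star$, and interiority gives $\delta^\star \in (0, 1/(d+1))$. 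Substituting $\delta_i = \delta^\star$ lets the factor $\sum_{i=0}^d \epsilon^i$ cancel between numerator and denominator, leaving $R(\vec{\delta}^\star) = h_b(\delta^\star)/(d\delta^\star + 1/\bar{\epsilon})$. Since every diagonal vector $(\delta,\dots,\delta)$ with $\delta \in [0, 1/(d+1)]$ also lies in $\Delta_d$ and evaluates to the same one-dimensional formula, optimality of $\vec{\delta}^\star$ upgrades this to the claimed identity $C_{(d,\infty)}^{\text{fb}}(\epsilon) = \max_{\delta \in [0, 1/(d+1)]} h_b(\delta)/(d\delta + 1/\bar{\epsilon})$.

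The only genuinely delicate step is the boundary-elimination argument in Assertion 1: all the asymmetry of the optimization is absorbed into the infinite-slope behaviour of $h_b$ at $0$, which is what forces the optimum away from any coordinate face not already contained in the top face $\sum \delta_i = 1$. Every other step is a routine first-order calculation exploiting the strict monotonicity of $h_b'$ on $(0,1)$.
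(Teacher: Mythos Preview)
Your proof is correct. For the interior case (Assertion 2), your argument and the paper's are essentially identical: both set the partial derivatives to zero and use the injectivity of $h_b'(\delta)=\log_2((1-\delta)/\delta)$ on $(0,1)$ to conclude that all components of an interior stationary point coincide, then substitute into $R$. For the boundary-elimination step (Assertion 1), however, you take a different and cleaner route. The paper invokes the KKT conditions: if $\delta_j^\star=0$, the stationarity equation reads $\partial R/\partial \delta_j\big|_{\delta_j=0}=-\mu_j+\hat{\mu}$, and since the left side diverges to $+\infty$, the paper concludes (somewhat informally) that $\hat{\mu}=+\infty$ and then appeals to complementary slackness to force $\hat g(\vec{\delta}^\star)=0$. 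Your direct perturbation argument avoids this infinite-multiplier awkwardness entirely: you compute $R(\vec{\delta}^\star+\eta e_j)-R(\vec{\delta}^\star)$ explicitly and let $h_b(\eta)/\eta\to\infty$ do the work. Both arguments pivot on the same infinite-slope behaviour of $h_b$ at $0$, but yours is more elementary and fully rigorous, at the small cost of having to dispose of the edge cases $\epsilon\in\{0,1\}$ by hand.
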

The proof of Proposition \ref{prop:simplify} is provided in Appendix A.

Figure \ref{fig:fbcap} shows plots of the feedback capacity for $d=1,2,3$. Several comments are now in order. The feedback capacity is equal to the noiseless capacity at $\epsilon=0$, and monotonically decreases to $0$ at $\epsilon=1$. As in \cite{0k}, numerical evaluations indicate that the feedback capacity is a concave function of the channel parameter, $\epsilon$. At $d=0$, which corresponds to the case of the BEC with no constraints, $R(0.5)$ equals $1-\epsilon$, which, in turn, equals the (feedback) capacity of the BEC with no input constraints. For $d=1$, it is easy to see that our feedback capacity expression recovers the formula for feedback capacity derived in \cite{SabBEC}. Indeed, for any $\epsilon$, it holds that:
\[
C^{\text{fb}}_{(1,\infty)}(\epsilon)\geq \max\limits_{\delta\in [0,\frac{1}{2}]}R(\delta,\delta) = \max\limits_{\delta\in [0,\frac{1}{2}]} \frac{h_b(\delta)}{\delta+\frac{1}{1-\epsilon}} = C^{\text{nc}}_{(1,\infty)}(\epsilon),
\]
where in the last equality, $C^{\text{nc}}_{(1,\infty)}(\epsilon)$ is the capacity with non-causal knowledge of erasures (or non-causal capacity), the expression for which was derived in \cite{SabBEC}. However, once again, it holds operationally that the feedback capacity is less than or equal to the non-causal capacity, thereby showing that $C^{\text{fb}}_{(1,\infty)} = C^{\text{nc}}_{(1,\infty)}(\epsilon)$, which agrees with the main result of \cite{SabBEC}. 

Similar reasoning leads us to the following corollary of Theorem \ref{thm:main} (stated as Corollary III.1 in \cite{arnk21}), for the case when $d=2$:
\begin{corollary}
	For $d=2$ and $\epsilon\in [0,1-\frac{1}{2\log_2(3/2)}]$, it holds that the feedback and non-causal capacities are equal, i.e.,
	\[
	C^{\text{fb}}_{(2,\infty)}(\epsilon) = C^{\text{nc}}_{(2,\infty)}(\epsilon) = \max\limits_{\delta\in [0,\frac{1}{3}]} \frac{h_b(\delta)}{2\delta+\frac{1}{1-\epsilon}}.
	\]
\end{corollary}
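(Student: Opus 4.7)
The plan is to mirror the $d=1$ argument sketched in the discussion preceding the corollary: evaluate the feedback-capacity formula of Theorem \ref{thm:main} at the symmetric point $\vec{\delta}=(\delta,\delta,\delta)$ to produce a lower bound on $C^{\text{fb}}_{(2,\infty)}(\epsilon)$, identify this lower bound with the non-causal capacity $C^{\text{nc}}_{(2,\infty)}(\epsilon)$ over the stated $\epsilon$-range, and then close the sandwich using the operational inequality $C^{\text{fb}}_{(2,\infty)}(\epsilon)\le C^{\text{nc}}_{(2,\infty)}(\epsilon)$.

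First, I would substitute $\delta_0=\delta_1=\delta_2=\delta$ in the definition \eqref{eq:R}. The common factor $1+\epsilon+\epsilon^2=\sum_{i=0}^{2}\epsilon^i$ cancels between numerator and denominator, yielding $R(\delta,\delta,\delta)=\tfrac{\bar{\epsilon}\,h_b(\delta)}{1+2\bar{\epsilon}\delta}=\tfrac{h_b(\delta)}{2\delta+1/(1-\epsilon)}$. The simplex constraint $\sum_i\delta_i\le 1$ restricts $\delta$ to $[0,1/3]$, so Theorem \ref{thm:main} immediately gives $C^{\text{fb}}_{(2,\infty)}(\epsilon)\ge\max_{\delta\in[0,1/3]}\tfrac{h_b(\delta)}{2\delta+1/(1-\epsilon)}$.

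Next, I would pin down the $\epsilon$-range in which this lower bound coincides with the non-causal capacity of \cite{arnk21}. The first-order condition for the unconstrained problem $\max_{\delta}\tfrac{h_b(\delta)}{2\delta+c}$ with $c=1/(1-\epsilon)$ reads $h_b'(\delta)(2\delta+c)=2h_b(\delta)$, and evaluating it at $\delta=1/3$ using $h_b'(1/3)=1$ and $h_b(1/3)=\log_2 3-2/3$ yields $c=2\log_2(3/2)$, i.e., $\epsilon=1-\tfrac{1}{2\log_2(3/2)}$. A brief monotonicity check shows that the unconstrained maximizer is strictly increasing in $c$ (hence in $\epsilon$), so for $\epsilon$ below the threshold it lies strictly inside $[0,1/3]$ and the upper bound $\delta\le 1/3$ is inactive; the non-causal capacity formula from \cite{arnk21} then collapses precisely to the symmetric expression $\max_{\delta\in[0,1/3]}\tfrac{h_b(\delta)}{2\delta+1/(1-\epsilon)}$.

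Combining the lower bound from step one, the identification from step two, and the operational bound $C^{\text{fb}}_{(2,\infty)}(\epsilon)\le C^{\text{nc}}_{(2,\infty)}(\epsilon)$ (since causal feedback conveys no more information than non-causal knowledge of the erasure pattern) pins down all three quantities to the claimed value. The main obstacle is step two: one must extract from \cite{arnk21} the precise $\epsilon$-regime in which the non-causal optimizer remains interior and the formula reduces to the clean symmetric form. Once that structural identification is in place, the threshold computation via the first-order condition at $\delta=1/3$ is short, and the rest of the argument is a direct sandwich.
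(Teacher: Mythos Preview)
Your proposal is correct and follows essentially the same sandwich argument as the paper: lower-bound $C^{\text{fb}}_{(2,\infty)}$ via Theorem~\ref{thm:main} at the symmetric point, show the non-causal maximizer stays in $[0,1/3]$ for the stated $\epsilon$-range, and close with the operational inequality $C^{\text{fb}}\le C^{\text{nc}}$. The only cosmetic differences are that the paper takes the non-causal formula $C^{\text{nc}}_{(d,\infty)}(\epsilon)=\max_{\delta\in[0,1/2]}V(\delta)$ from \cite{0k} (not \cite{arnk21}) and locates the threshold by checking the sign of $V'(1/3)$ directly rather than via monotonicity of the unconstrained maximizer in $c$.
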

\begin{proof}
	We note, as before, that it suffices to show that for $\epsilon\in [0,1-\frac{1}{2\log_2(3/2)}]$, it holds that $C^{\text{fb}}_{(2,\infty)}(\epsilon)\geq C^{\text{nc}}_{(2,\infty)}(\epsilon)$.
	
	From \cite{0k}, we know that the non-causal capacity of the $(d,\infty)$-RLL input-constrained BEC is given by:
	\[
	C^{\text{nc}}_{(d,\infty)}(\epsilon) = \max\limits_{\delta\in [0,\frac{1}{2}]} V(\delta),
	\]
	where $V(\delta) = \frac{h_b(\delta)}{d\delta+\frac{1}{1-\epsilon}}$. We note that the derivative, $V^{\prime}(\cdot)$, is given by
	\begin{equation*}
		V^{\prime}(\delta) = \frac{(\kappa+d)\log(1-\delta)-\kappa\log\delta}{(\kappa+d\delta)^2},
	\end{equation*} 
	where we write $\frac{1}{1-\epsilon}$ as $\kappa$. It can be checked that $V^\prime(\delta)$ is strictly decreasing in $\delta$.
	 Now, when $d=2$, it is true that $V^\prime\left(\frac{1}{3}\right)\leq 0$ if, and only if, $\kappa \leq \frac{2 \log(1+1/2)}{\log 2}$, or, equivalently, if, and only if, $\epsilon \leq 1-\frac{1}{2\log(\frac{3}{2})}$. Since $V^\prime(0^+)>0$, we have that for $\epsilon \leq 1-\frac{1}{2\log(\frac{3}{2})}$, the unique maximum of $V(\cdot)$, over $[0,1]$, occurs in the interval $[0,\frac{1}{3}]$. Hence, for $\epsilon \in [0,1-\frac{1}{2\log_2(3/2)}]$, we have that
	 \[
	 C^{\text{nc}}_{(d,\infty)}(\epsilon) = \max\limits_{\delta\in [0,\frac{1}{3}]} V(\delta).
	 \]
	Furthermore, from Theorem \ref{thm:main}, we get that
	\[
	C^{\text{fb}}_{(2,\infty)}(\epsilon)\geq \max\limits_{\delta\in [0,\frac{1}{3}]}R(\delta,\delta,\delta) = \max\limits_{\delta\in [0,\frac{1}{3}]} V(\delta) = C^{\text{nc}}_{(2,\infty)}(\epsilon).
	\]
\end{proof}
We note from the observations in \cite{0k} that this equality of feedback and non-causal capacities is not true for general $d$.

Figures \ref{fig:thang1} and \ref{fig:thang2} show comparisons of the feedback capacities for $d=1$ and $d=2$, respectively, with dual capacity-based upper bounds on the capacities without feedback, derived in \cite{Thangaraj}. Clearly, it holds that the capacity without feedback is less than the feedback capacity, for this class of channels. 

\begin{figure*}[t]
	\centering
	\includegraphics[scale=0.13]{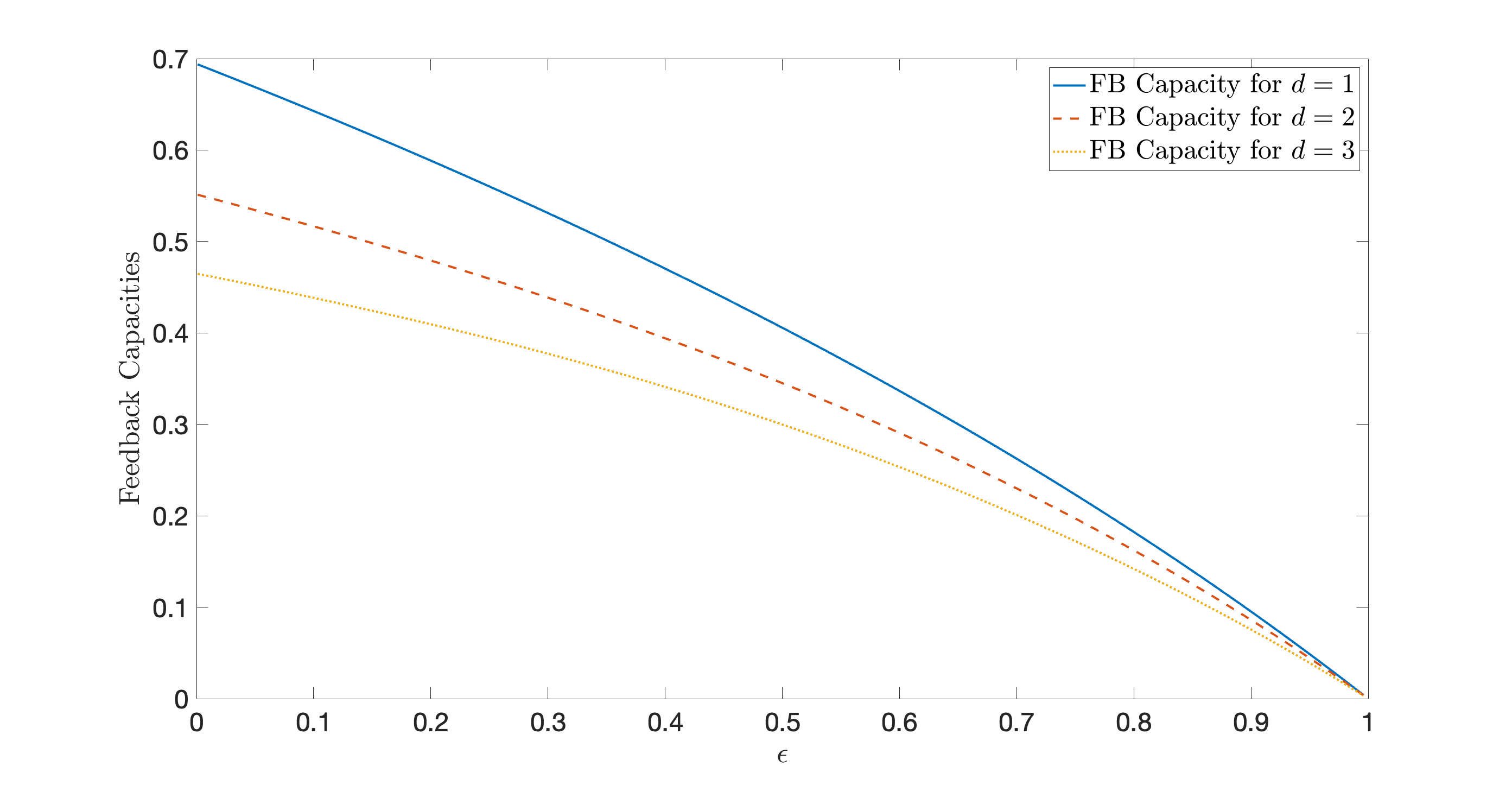}
	\caption{Plots of the feedback capacities for $d=1,2,3$.}
	\label{fig:fbcap}
\end{figure*}
%

\begin{figure*}[t]
	\centering
	\subfloat[]{\includegraphics[width=0.5\linewidth]{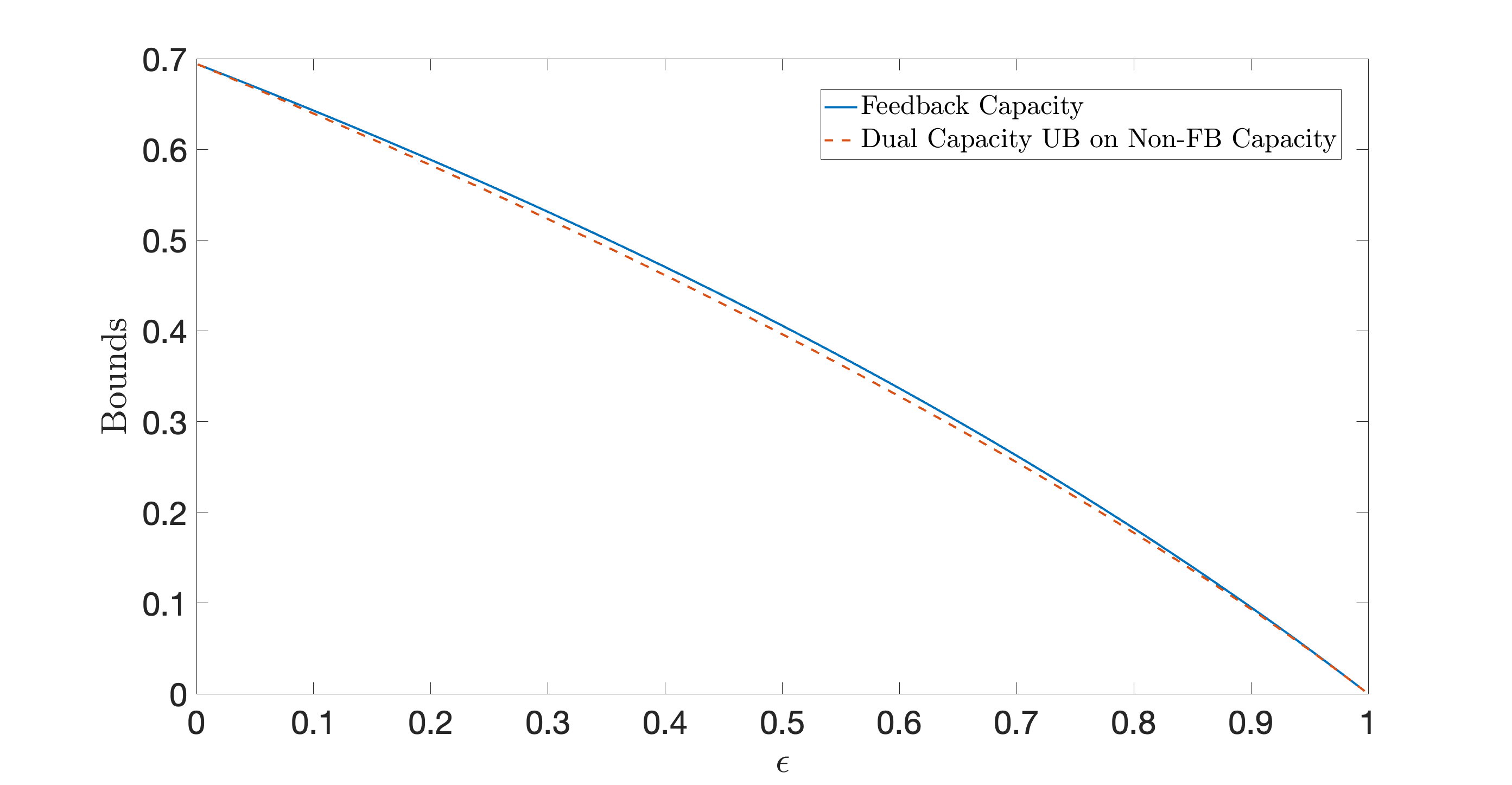}\label{fig:thang1}}
	\hfill
	\subfloat[]{\includegraphics[width=0.5\linewidth]{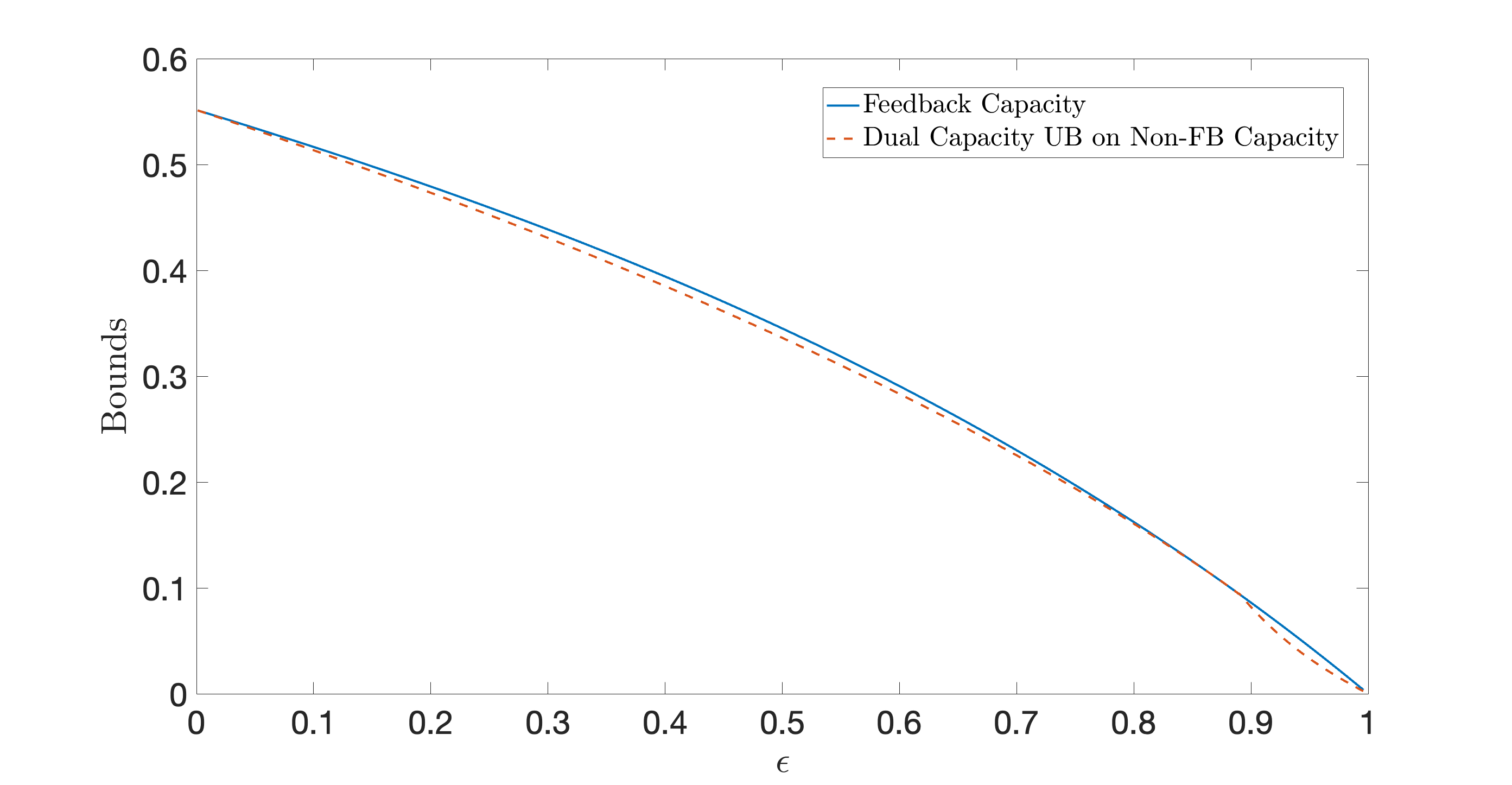}\label{fig:thang2}}
	\caption{Plots (a) and (b) show comparisons of the feedback capacities of the $(1,\infty)$- and $(2,\infty)$-RLL input-constrained BEC with dual capacity-based upper bounds on the capacity without feedback, from \cite{Thangaraj}.}
\end{figure*}

\section{Optimal Feedback Coding Scheme}
\label{sec:fbcode}
This section presents a simple feedback coding scheme that achieves the lower bound in Theorem \ref{thm:main}. Our labelling-based coding scheme is similar to the coding schemes in \cite{SabBEC} and \cite{0k}. The main feature of the scheme is a dynamically-changing set of possible messages that is known to both the encoder and the decoder at all times. The objective of the encoder is to communicate a sequence of bits that will enable the decoder to narrow down the set of possible messages to a single message.

Each message $m\in [2^{nR}]$ is mapped uniformly to a point in the unit interval, i.e., the message $m$ is mapped to the point $\frac{m-1}{2^{nR}}$. At each time instant $i$, the unit interval is partitioned into sub-intervals that are labelled by either a `$0$' or a `$1$'. The input $x_i$ to the channel is determined using the label of the sub-interval containing the message. 

The coding scheme proceeds as follows: we first pick positive real numbers $\delta_0,\delta_1,\ldots,\delta_d$ such that $\sum\delta_i\leq 1$. To determine the input bits to be sent, we use a set of $d+2$ labellings, $\mathcal{L}_0,\ldots, \mathcal{L}_d,\hat{\mathcal{L}}$, with the interval $[\sum_{j<i}\delta_j,\sum_{j\leq i}\delta_j)$, in $\mathcal{L}_i$, labelled by a `$1$'. Further, labelling $\hat{\mathcal{L}}$ is such that the entire interval $[0,1)$ is labelled by a `$0$'. Figure \ref{fig:labellings} shows an illustration of the set of labellings.

\begin{figure}[htbp]
	\centering
	\includegraphics[width=0.4\textwidth]{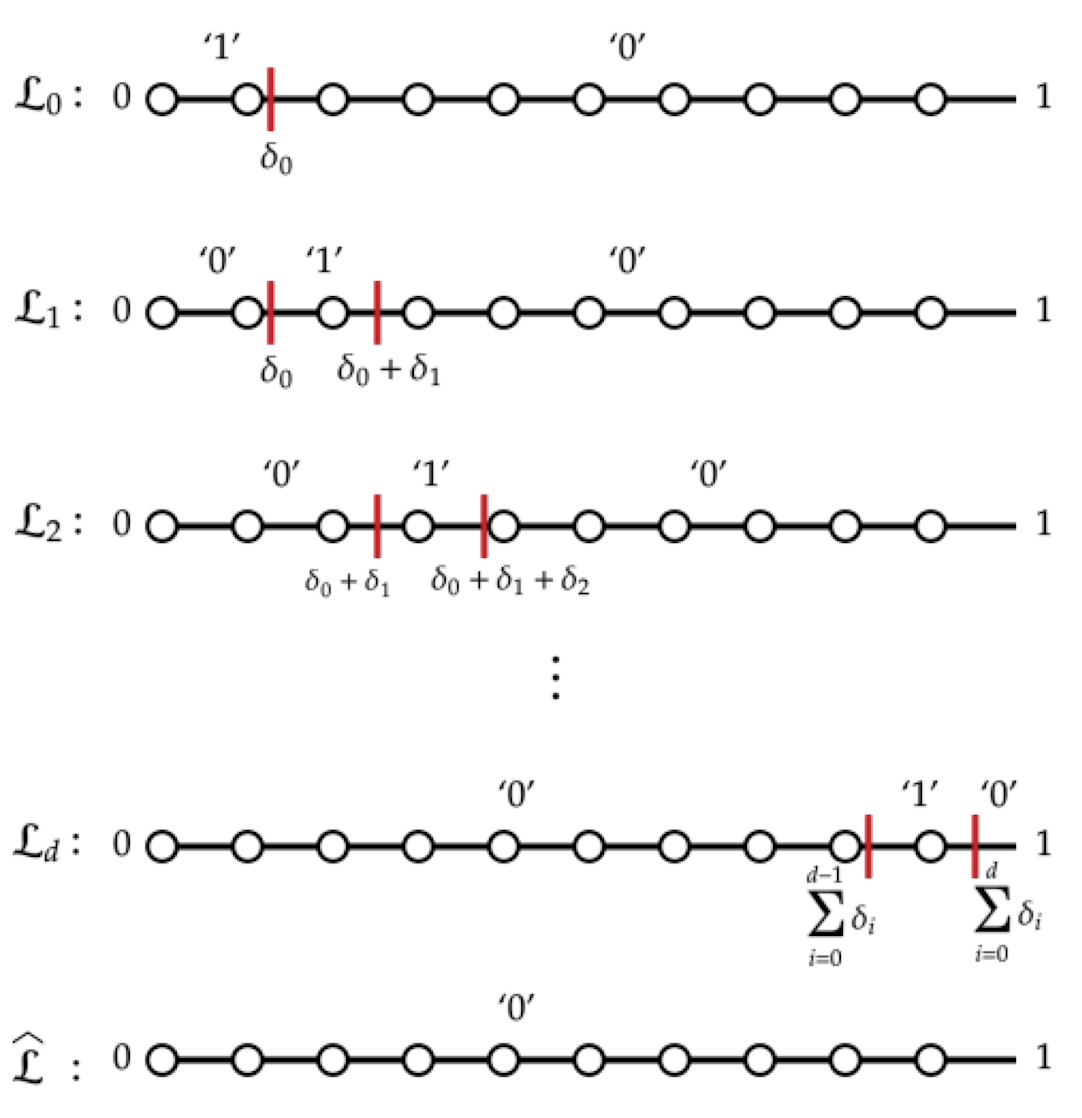}
	\caption{The set of labellings, $\mathcal{L}_0,\ldots, \mathcal{L}_d$, used in the coding scheme.}
	\label{fig:labellings}
\end{figure}

\begin{algorithm}[t]
	\caption{Coding Scheme}
	\label{alg}
	\begin{algorithmic}[1]		
		\Procedure{Code}{m}       
		\State Set $\mathcal{M}_0 = [2^{nR}]$ and $Y_{-d}^{0} = (0,\ldots,0)$.
		\State Set Label = $\mathcal{L}_0$.
		\State Set time index $i = 1$.
		\While{$|\mathcal{M}_{i-1}|>1$}
		\State Transmit $X_i = \Gamma(m,\text{Label})$. \Comment{Encoder}
		\If{none of $Y_{i-d-1}^{i-1}$ is a $1$}
		\If{$Y_i = 0$}
		\State Set $\mathcal{M}_i = \mathcal{M}_{i-1}^{(0)}$.
		\ElsIf{$Y_i=1$}
		\State Set $\mathcal{M}_i =\mathcal{M}_{i-1}^{(1)}$. 
		\Else
		\State Set $\mathcal{M}_i =\mathcal{M}_{i-1}$. 
		\EndIf
		\Else
		\State Set $\mathcal{M}_i =\mathcal{M}_{i-1}$. 
		\EndIf
		\State Label $\leftarrow$ $G(\text{Label},Y_{i-d}^{i})$.
		\State Update $i = i+1$.
		\EndWhile
		\State Output $\hat{m}$, where $\mathcal{M}_{i-1} = \{\hat{m}\}$. \Comment{Decoder}
		\EndProcedure
		
	\end{algorithmic}
\end{algorithm} 
The {labelling to be used at any time instant} is a function of the channel outputs upto that time instant and can be recursively computed using the previous labelling and previous $d+1$ channel outputs. Thus, both the encoder and the decoder can compute the labelling used at all times. Let $L_i$ denote the labelling used at time $i$. We fix $L_0 := \mathcal{L}_0$. Then, $L_{i+1} = G(L_i,Y^{i}_{i-d})$, where the function $G$ is defined as follows:

\[G(L_i,Y^{i}_{i-d+1}) =
\begin{cases}
	\hat{\mathcal{L}},\ \text{ if one of $Y_{i-d+1}^{i}$ is a 1},\\
	\mathcal{L}_0,\ \text{if $Y_{i-d} =1$,}\\
	\mathcal{L}_0,\ \text{if $Y_i=0$},\text{ and none} \text{ of $Y_{i-d+1}^{i}$ is a 1},\\
	\mathcal{L}_{\text{$j+1$ mod$(d+1)$}},\ \text{if $L_i = \mathcal{L}_j$, $Y_{i}=?$ and none} \\\ \ \ \ \ \ \ \ \ \ \ \ \ \ \ \ \  \text{of $Y_{i-d+1}^{i}$ is a 1}.\\
\end{cases}
\]

The transitions between the labellings can be represented by the finite state machine (FSM) shown in Figure \ref{fig:QUB}. The labellings in conjunction with the message are used to determine the bit $X_{i+1}$ to be transmitted. Formally, $X_{i+1} = \Gamma(m,L_{i+1})$, where the function $\Gamma$ is defined as:

\[\Gamma(m,L_{i+1}) =
\begin{cases}
	0,\ \text{if $L_{i+1} = \mathcal{L}_j$, for some $j$, and}\\ \ \ \ \ \ \ \ \ \ \ \ \ \ \text{$\frac{m-1}{2^{nR}} \notin [\sum_{z<j}\delta_z,\sum_{z\leq j}\delta_z)$},\\
	0,\ \text{if $L_{i+1} = \hat{\mathcal{L}}$},\\
	1,\ \text{if $L_{i+1} = \mathcal{L}_j$, for some $j$, and}\\ \ \ \ \ \ \ \ \ \ \ \ \ \ \text{$\frac{m-1}{2^{nR}} \in [\sum_{z<j}\delta_z,\sum_{z\leq j}\delta_z)$}.
\end{cases}
\]
The chronological order of a single use of the channel at time $i$, for a fixed message $m$, thus is $L_i\rightarrow X_i\rightarrow Y_i$, with $L_{i+1} = G(L_i,Y_{i-d}^{i})$.

For a given output sequence $y^i$, we denote by $\mathcal{M}_i$ the set of \emph{possible messages} after time $i$, i.e., $\mathcal{M}_i := \{m\in [2^{nR}]: P(m|y^{i})>0\}$, with $\mathcal{M}_0:=[2^{nR}]$. Note that both the encoder and the decoder can compute the conditional distribution $P(m|y^{i})$ using Bayes' rule. We use the notation $\mathcal{M}_i^{(0)}$ and $\mathcal{M}_i^{(1)}$ to denote the set of messages labelled by a `$0$' and by a `$1$', respectively, in $\mathcal{M}_i$.

A transmission at time $i$ is said to be \emph{successful} if $|\mathcal{M}_i|<|\mathcal{M}_{i-1}|$. Specifically, a successful transmission can occur in one of two scenarios: the first is $y_i = 1$, and the second is where $y_i = 0$ and none of $y_{i-d}^{i-1}$ is a $1$. After a successful transmission, the set of possible messages is calculated and expanded uniformly to the unit interval. It is easy to see that in the first kind of successful transmission, the new set of possible messages, $\mathcal{M}_i$, is equal to $\mathcal{M}_{i-1}^{(1)}$, and in the second kind, $\mathcal{M}_i$ equals $\mathcal{M}_{i-1}^{(0)}$. Figure \ref{fig:succ} shows an illustration of the second of the two kinds of successful transmissions. Figure \ref{fig:eras} depicts the situation when a sequence of erasures is received. In this case, the new set of possible messages, $\mathcal{M}_i$, is equal to $\mathcal{M}_{i-1}$. This transmission procedure continues repeatedly until the set of possible messages contains one message. Additionally, we note that the coding scheme is zero-error, since at the end of the algorithm, the decoder can uniquely decode the transmitted message. The encoding and decoding procedures are described in Algorithm \ref{alg}. 
\begin{figure*}[!t]
	\centering
	
	\includegraphics[width=0.45\textwidth]{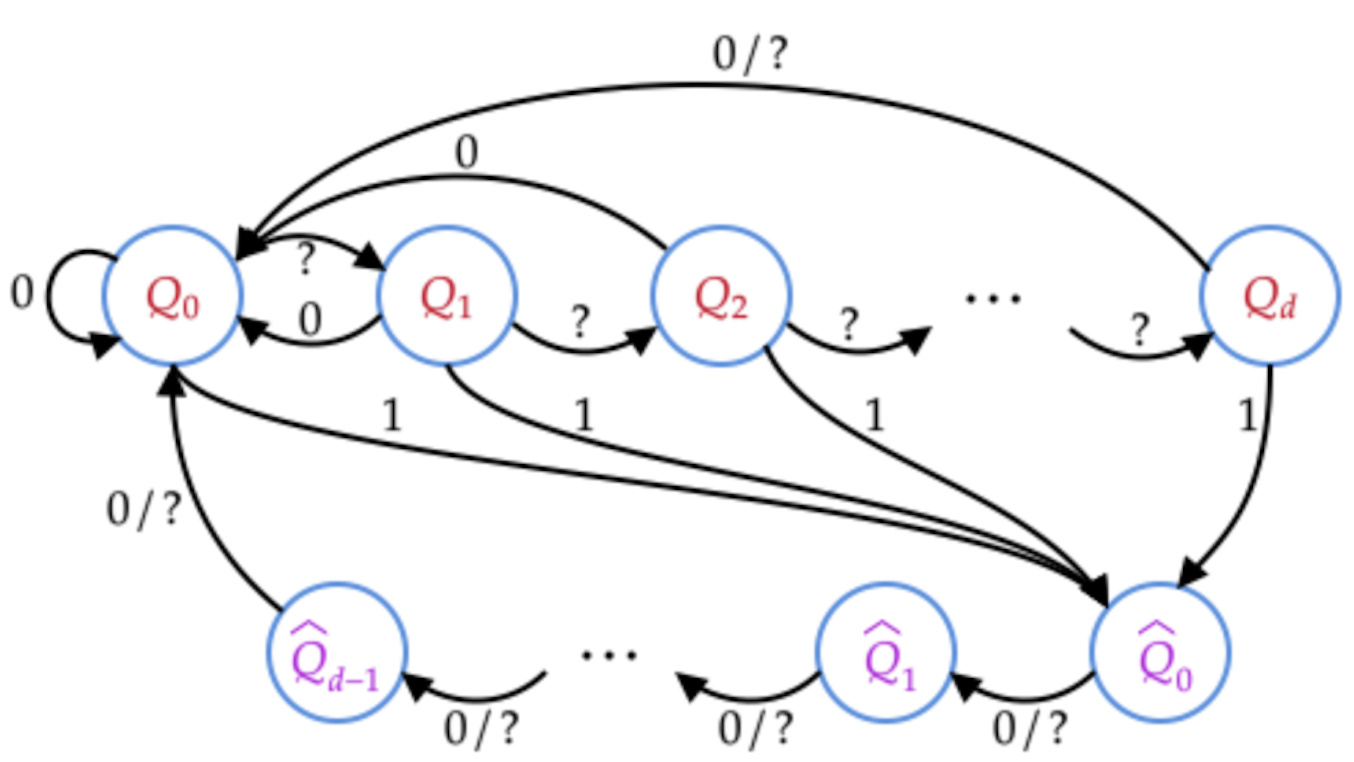}
	\caption{Figure shows the finite-state machine (FSM) that represents transitions between the labellings, with the edges labelled by outputs. When the encoder is in state $Q_i$, for $i\in \{0,1,\ldots,d\}$, the labelling used is $\mathcal{L}_i$, and when the encoder is in state $\hat{Q}_i$, for $i\in \{0,1,\ldots,d-1\}$, the labelling used is $\hat{\mathcal{L}}$.}
	\label{fig:QUB}
\end{figure*}

The construction of the labellings ensures that the $(d,\infty)$-RLL input constraint is obeyed. This can be seen from the fact that if ever $Y_i = 1$ for some $i$, it holds that the next $d$ inputs are all set to be $0$s. Further, the staggered manner in which the sub-intervals are labelled by a $1$ in the labellings $\mathcal{L}_0,\ldots,\mathcal{L}_d$ ensures that the input constraint is satisfied when a sequence of erasures is received, too. 

The analysis of the rate of the feedback coding scheme is similar to the proofs of Lemma 3 and Lemma 4 in \cite{0k}. In order to make the exposition self-contained, we repeat parts of the proofs here. For a time index $i\in [n]$, we define $\mathsf{J}_i$ to be the number of information bits gained in a single channel use, which is the logarithm of the change in the size of possible messages at the end of the channel use, i.e.,
\begin{equation}
\label{eq:Jk}
\mathsf{J}_i:= \log_2|\mathcal{M}_{i-1}|-\log_2|\mathcal{M}_{i}|.
\end{equation}
 and $\mathsf{L}_i$ to denote the random variable corresponding to the labelling used, with $\mathsf{L}_i\in \mathcal{L} := \{\hat{\mathcal{L}},\mathcal{L}_0,\ldots,\mathcal{L}_d\}$. The following lemma then holds true:

\begin{lemma}
	\label{lem:gain}
	For any time $i\in [n]$, and for all $\ell\in \mathcal{L}$, we have that $\mathbb{E}[\mathsf{J}_i| \mathsf{L}_i=\ell] = \bar{\epsilon}h_b(\delta_\ell)$, where
	\[\delta_\ell = 
	\begin{cases}
		0,\ \text{if $\ell = \hat{\mathcal{L}}$},\\
		\delta_j,\ \text{if $\ell = \mathcal{L}_j$}.
	\end{cases}
	\]
\end{lemma}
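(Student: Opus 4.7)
The plan is to prove the identity by case analysis on the labelling $\ell$, combined with a direct computation of the information gain $\mathsf{J}_i$ in each of the three possible channel outputs $Y_i \in \{0,?,1\}$. A preparatory fact I would first establish (likely by a quick induction on $i$) is that, given $\mathcal{M}_{i-1}$ and the current labelling, the points $\frac{m-1}{2^{nR}}$ for $m \in \mathcal{M}_{i-1}$ are (up to negligible rounding for large $n$) uniformly distributed on the ``current'' unit interval. This is guaranteed by the design of the algorithm, which after each successful transmission rescales the chosen sub-interval back to $[0,1)$. Consequently, if $\mathsf{L}_i = \mathcal{L}_j$, then a fraction $\delta_j$ of the messages in $\mathcal{M}_{i-1}$ are labelled `$1$' and a fraction $1-\delta_j$ are labelled `$0$', i.e., $|\mathcal{M}_{i-1}^{(1)}| = \delta_j |\mathcal{M}_{i-1}|$ and $|\mathcal{M}_{i-1}^{(0)}| = (1-\delta_j)|\mathcal{M}_{i-1}|$.

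For the case $\ell = \hat{\mathcal{L}}$, the entire unit interval is labelled $0$, so $X_i = 0$ deterministically. By inspection of the FSM in Figure \ref{fig:QUB}, reaching a $\hat{Q}$-state at time $i$ means that at least one of $Y_{i-d-1}^{i-1}$ was a $1$, so Algorithm \ref{alg} enters the final \textbf{else} branch and assigns $\mathcal{M}_i = \mathcal{M}_{i-1}$ no matter what $Y_i$ equals. Hence $\mathsf{J}_i = 0$ almost surely, giving $\mathbb{E}[\mathsf{J}_i\mid \mathsf{L}_i = \hat{\mathcal{L}}] = 0 = \bar{\epsilon}\, h_b(0)$, consistent with the definition $\delta_{\hat{\mathcal{L}}} = 0$.

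For the case $\ell = \mathcal{L}_j$, the FSM ensures that none of $Y_{i-d-1}^{i-1}$ is a $1$, so the first branch of the algorithm applies. By the uniformity fact, $X_i = 1$ with probability $\delta_j$ and $X_i = 0$ with probability $1-\delta_j$. Conditioning on $Y_i$: if $Y_i = 1$ (probability $\bar{\epsilon}\delta_j$), then $\mathcal{M}_i = \mathcal{M}_{i-1}^{(1)}$ and $\mathsf{J}_i = -\log_2 \delta_j$; if $Y_i = 0$ (probability $\bar{\epsilon}(1-\delta_j)$), then $\mathcal{M}_i = \mathcal{M}_{i-1}^{(0)}$ and $\mathsf{J}_i = -\log_2(1-\delta_j)$; if $Y_i = {?}$ (probability $\epsilon$), then $\mathcal{M}_i = \mathcal{M}_{i-1}$ and $\mathsf{J}_i = 0$. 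Assembling,
\[
\mathbb{E}[\mathsf{J}_i\mid \mathsf{L}_i = \mathcal{L}_j] = \bar{\epsilon}\bigl(-\delta_j \log_2 \delta_j - (1-\delta_j)\log_2(1-\delta_j)\bigr) = \bar{\epsilon}\, h_b(\delta_j).
\]

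The main (mild) obstacle I anticipate is the uniformity preliminary: strictly, $|\mathcal{M}_{i-1}^{(1)}|$ need not equal exactly $\delta_j |\mathcal{M}_{i-1}|$, since the former is an integer while $\delta_j |\mathcal{M}_{i-1}|$ generally is not. One can handle this either by an asymptotic argument (the rounding discrepancy is $O(1/|\mathcal{M}_{i-1}|)$, which is vanishing for large $n$ and can be absorbed into the rate analysis that follows) or, following the approach in \cite{0k}, by interpreting the coding scheme in terms of continuous uniform distributions on the unit interval and discretising at the end. Either route reduces the lemma to the clean case analysis above.
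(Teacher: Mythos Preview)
Your proposal is correct and follows essentially the same approach as the paper: both arguments condition on whether $Y_i$ is erased (the paper via the indicator $\theta_i$, you via the three-way split on $Y_i\in\{0,?,1\}$), note that an erasure contributes zero to $\mathsf{J}_i$, use the fact that under labelling $\ell$ the transmitted bit is $\mathrm{Ber}(\delta_\ell)$ so the message set shrinks by a factor $\delta_\ell$ or $1-\delta_\ell$ upon successful reception, and finish with the same binary-entropy computation; both also defer the integer-rounding issue to a remark appealing to \cite{0k} and a clean-up phase. Your treatment of the $\hat{\mathcal{L}}$ case via the FSM structure is slightly more explicit than the paper's (which folds it into the general $\delta_\ell=0$ case), but this is a presentational difference, not a methodological one.
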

\begin{figure}[!t]
	\centering
	
	\includegraphics[width=0.5\textwidth]{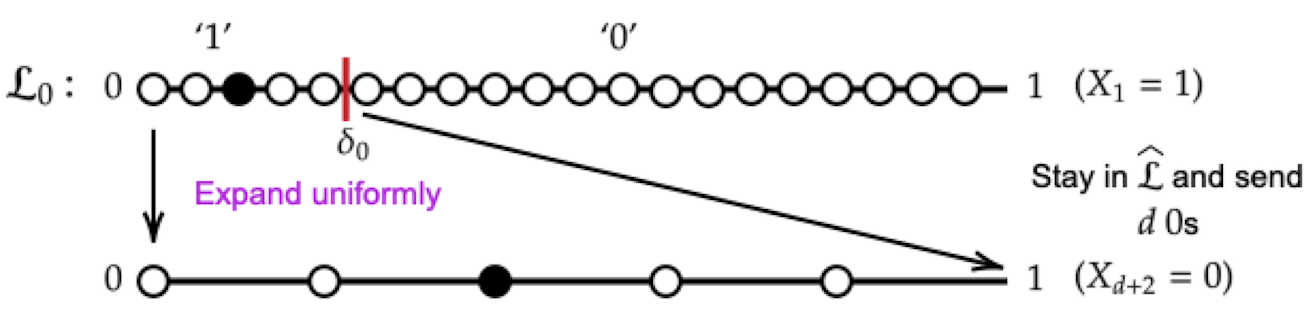}\label{<figure2>}
	\caption{Figure shows an illustration of a successful transmission of the second kind, when $X_1 = Y_1 = 1$, and when the encoder then transmits $d$ additional zeros. Note that the size of the set of possible messages reduces.}
	\label{fig:succ}
\end{figure}

\begin{figure}[!t]
	\centering
	\includegraphics[width=0.46\textwidth]{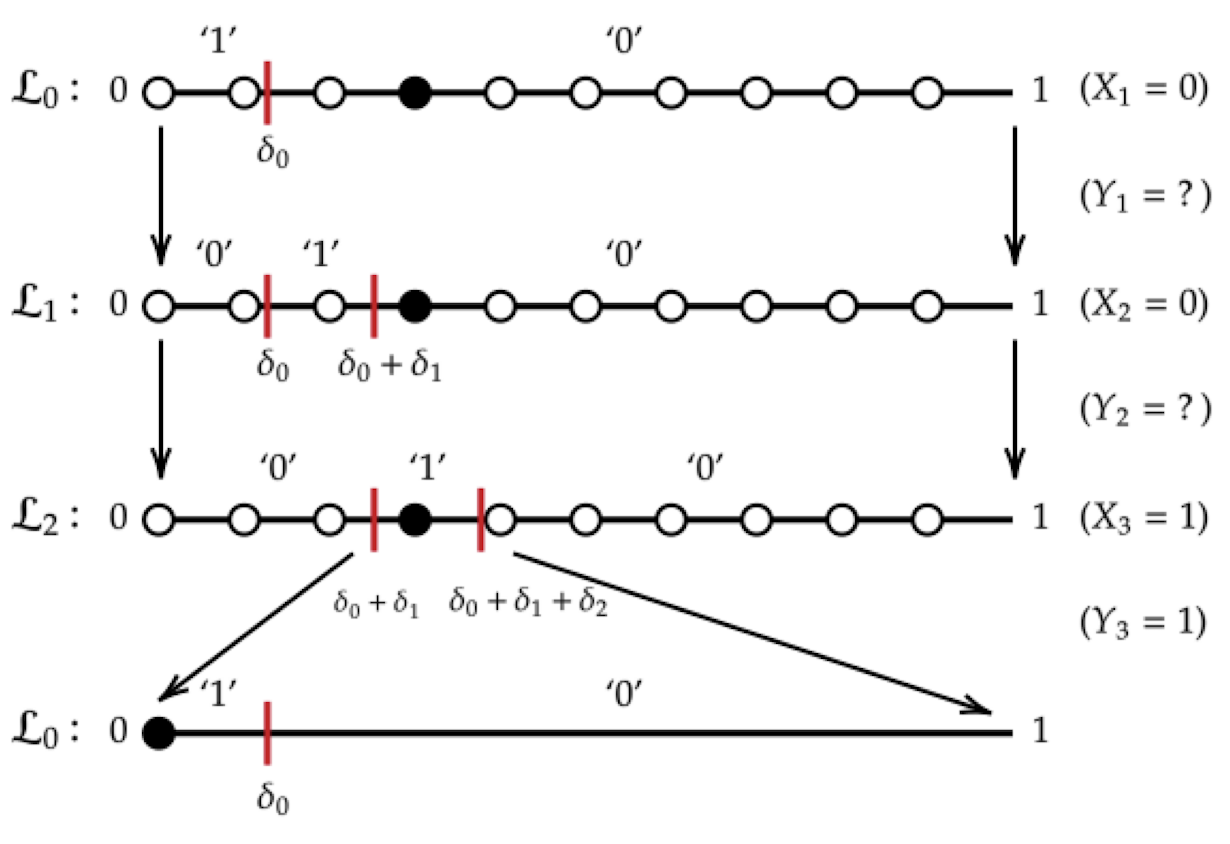}
	\caption{The figure shows the setting when two consecutive erasures are received ($Y_1=Y_2=?$), followed by the successful reception of $X_3 = 1$. So long as erasures are received, the set of possible messages is retained as such, while the labellings cycle through $\mathcal{L}_0$ to $\mathcal{L}_d$. Upon the successful reception of $X_3$, and after the transmission of $d$ $0$s, the labelling is changed to $\mathcal{L}_0$. However, since the set of possible messages is now a singleton, the transmission ends with the decoder declaring the correct identity of the message.}
	\label{fig:eras}
\end{figure}

\begin{proof}
	We let the random variable $\theta$ denote the indicator that the current output is unerased, i.e.,
	\[
	\theta_i = \mathds{1}\{Y_i\neq `?'\}.
	\]
	Then,
	\begin{align}
		\mathbb{E}[\mathsf{J}_i|\mathsf{L}_i=\ell]&=\epsilon\mathbb{E}[\mathsf{J}_i|\mathsf{L}_i=\ell,\theta_i = 0]+\bar{\epsilon}\mathbb{E}[\mathsf{J}_i|\mathsf{L}_i=\ell,\theta_i = 1] \notag\\
		&= \bar{\epsilon}\mathbb{E}[\mathsf{J}_i|\mathsf{L}_i=\ell,\theta_i = 1] \label{eq:lemm},
	\end{align}
where the second equality holds since if an erasure is received, the set of possible messages remains the same.

We now note that in any of the labellings $\mathcal{L}_j$, it holds that the length of the sub-interval labelled by a `$1$' equals $\delta_j$, and equals $0$ in $\hat{\mathcal{L}}$. Hence, if labelling $\ell \in \mathcal{L}$ is employed, the bit transmitted is distributed according to Ber$(\delta_\ell)$ (see the remark immediately after this proof), where $\delta_\ell$ is as defined in the statement of the theorem.

 Assume that the current size of the set of possible messages, $\mathcal{M}$, equals $k$, with the current labelling used being $\ell$. If the current bit transmitted is received successfully, then, the new set of possible messages has size $k\delta_\ell$, if the current input bit is a `$1$', and is equal to $k\bar{\delta}_\ell$, otherwise. Hence, the expected number of bits required to describe the new set of possible messages is $\bar{\delta}_{\ell} k \log_2(\bar{\delta}_{\ell} k)+\delta_\ell k \log_2(\delta_\ell k) = \log_2(k)-h_b(\delta_{\ell})$. Thus, given that $\mathsf{L}=\ell$, following a successful transmission, the decoder gains $h_b(\delta_{\ell})$ bits of information. Substituting into \eqref{eq:lemm}, we get that
\[
\mathbb{E}[\mathsf{J}|\mathsf{L}=\ell] = \bar{\epsilon}h_b(\delta_\ell).
\]
\end{proof}

\begin{remark}
	\label{rem:ber}
	We note as in \cite{0k} that since the messages are discrete points in $[0,1)$, the transmitted bit is actually distributed according to Ber$(\delta_\ell -e_i)$, where $e_i$ is a correction factor that is bounded as $0\leq e_i\leq \frac{1}{|\mathcal{M}_{i-1}|}$. We use Algorithm \ref{alg} for encoding, until a time $t$ such that $|\mathcal{M}_t|\leq 2^{\lambda}$, for an absolute constant, $\lambda>0$. A clean-up coding phase can then be employed, after the labelling-based scheme, similar to \cite[Appendix C]{SabBEC}, using which the remaining at most $\lambda$ bits are then transmitted. The rate of the overall two-stage coding scheme can then be made arbitrarily close to the rate calculated using the analysis in this paper.
\end{remark}

The following lemma computes the rate of the proposed coding scheme:
\begin{lemma}
	\label{lem:rate}
	For any $\epsilon \in [0,1]$, the proposed coding scheme achieves a rate
	\[
	R = \max_{\vec{\delta}\in \Delta_d} R(\vec{\delta}),
	\]
	where $R(\vec{\delta})$ is as defined in equation \eqref{eq:R}.
\end{lemma}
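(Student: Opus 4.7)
The plan is to compute the rate of the proposed scheme as the long-run expected information gain per channel use, using the stationary distribution of the Markov chain on the FSM states of Figure \ref{fig:QUB}, and then optimize over $\vec{\delta}\in\Delta_d$. To this end I would first verify that the FSM state process is an irreducible, aperiodic Markov chain on $\{Q_0,\ldots,Q_d,\hat{Q}_0,\ldots,\hat{Q}_{d-1}\}$, with transitions driven by the channel output: from $Q_j$ (labelling $\mathcal{L}_j$, transmitted bit Bernoulli$(\delta_j)$) the chain moves to $Q_0$ with probability $\bar{\epsilon}(1-\delta_j)$ upon output $0$, to $\hat{Q}_0$ with probability $\bar{\epsilon}\delta_j$ upon output $1$, and to $Q_{(j+1)\bmod(d+1)}$ with probability $\epsilon$ upon erasure; from $\hat{Q}_j$ (labelling $\hat{\mathcal{L}}$, input forced to $0$) no output $1$ is possible, so the chain transitions deterministically to $\hat{Q}_{j+1}$ for $j<d-1$, and from $\hat{Q}_{d-1}$ back to $Q_0$, corresponding to the completion of the $d$ forced zeros following a successful $1$.

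Next, I would solve the balance equations. The recursions $\pi(Q_{j+1})=\epsilon\,\pi(Q_j)$ give $\pi(Q_j)=\epsilon^{j}\pi(Q_0)$ for $j=0,\ldots,d$; balance at $\hat{Q}_0$ gives $\pi(\hat{Q}_0)=\bar{\epsilon}\,\pi(Q_0)\sum_{i=0}^{d}\epsilon^i\delta_i$; and the deterministic transitions among the $\hat{Q}_j$ force $\pi(\hat{Q}_j)=\pi(\hat{Q}_0)$ for every $j=0,\ldots,d-1$. Imposing $\sum_{j=0}^{d}\pi(Q_j)+\sum_{j=0}^{d-1}\pi(\hat{Q}_j)=1$ then yields
\[
\pi(Q_0)=\frac{1}{\sum_{i=0}^{d}\epsilon^i + d\bar{\epsilon}\sum_{i=0}^{d}\epsilon^i\delta_i}.
\]

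Third, by the ergodic theorem applied to the FSM chain, together with Lemma \ref{lem:gain} and the observation that $h_b(0)=0$ so that the states $\hat{Q}_j$ contribute zero to the gain, the long-run average information gain per channel use converges almost surely to
\[
\sum_{j=0}^{d}\pi(Q_j)\,\bar{\epsilon}\,h_b(\delta_j)=\frac{\bar{\epsilon}\sum_{i=0}^{d}\epsilon^i h_b(\delta_i)}{\sum_{i=0}^{d}\epsilon^i+d\bar{\epsilon}\sum_{i=0}^{d}\epsilon^i\delta_i}=R(\vec{\delta}).
\]
Writing $T_n$ for the number of channel uses until $|\mathcal{M}_{T_n}|=1$, this gives $n/T_n\to R(\vec{\delta})$ almost surely, so any $\vec{\delta}\in\Delta_d$ is achievable; taking the maximum over $\vec{\delta}$ produces the claim.

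The principal technical obstacle is the subtlety flagged in Remark \ref{rem:ber}: because the messages are discrete points in $[0,1)$, the transmitted bit under labelling $\ell$ is actually Bernoulli$(\delta_\ell-e_i)$ with a correction $e_i\in[0,|\mathcal{M}_{i-1}|^{-1}]$, so Lemma \ref{lem:gain} is only approximately valid. I would handle this exactly as indicated there: run Algorithm \ref{alg} only until the first time $t$ at which $|\mathcal{M}_t|\leq 2^{\lambda}$ for a fixed $\lambda>0$, and then append a clean-up coding phase as in \cite[Appendix C]{SabBEC} to transmit the at most $\lambda$ residual bits. Since $|\mathcal{M}_{i-1}|\geq 2^{\lambda}$ throughout the first phase, $e_i=O(2^{-\lambda})$ uniformly, so by choosing $\lambda$ large enough the effect on the empirical distribution of transmitted bits, and hence on the rate, is negligible, making the achievable rate arbitrarily close to $R(\vec{\delta})$.
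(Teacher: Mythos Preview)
Your proposal is correct and follows essentially the same approach as the paper: both compute the stationary distribution of the labelling Markov chain on the FSM of Figure~\ref{fig:QUB}, combine it with Lemma~\ref{lem:gain} via ergodicity to obtain the per-use average gain $R(\vec{\delta})$, and then optimize over $\vec{\delta}\in\Delta_d$. Your write-up is slightly more explicit about the transition probabilities, the balance equations, and the two-phase construction needed to absorb the discrete-message correction of Remark~\ref{rem:ber}, but the underlying argument is the same.
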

\begin{proof}
	Fix $0\leq \delta_0,\delta_1,\ldots,\delta_d\leq 1$, with $\sum_{i=0}^{d}\delta_i \leq 1$. The constraint is chosen so as to ensure that the lengths of the intervals used in the labellings in Figure \ref{fig:labellings} are all non-negative. 
	
	Consider the Markov chain induced by the transitions shown in Figure \ref{fig:QUB}, with the transition probabilities $P(Y=?|q) = \epsilon$, for all $q \in \{Q_0,Q_1,\ldots,Q_d,\hat{Q}_0,\hat{Q}_1,\ldots,\hat{Q}_{d-1}\}$, and
	\[
	P(Y=1|q) = 
	\begin{cases}
		\bar{\epsilon}\delta_i, \ \text{if $q = Q_i$, for some $i$},\\
		0, \ \text{otherwise}.
	\end{cases}
	\]
	Let $\pi(\ell)$ denote the stationary probability of using labelling $\ell$, which can be calculated using the stationary probabilities of states of the Markov chain on the FSM. Clearly, for $i\in \{0,1,\ldots,d\}$, $\pi(\mathcal{L}_i)$ equals $\pi(Q_i)$, and $\pi(\hat{\mathcal{L}})$ equals $\sum_{j=0}^{d-1}\pi(\hat{Q}_j)$. The rate of the coding scheme can be computed to be:
	\begin{align*}
		R &= \lim_{n\rightarrow \infty}  \frac{\log_2 |\mathcal{M}_0|}{n}\\
		&\stackrel{(a)}{=} \lim_{n\rightarrow \infty} \frac{1}{n}\sum_{k=1}^{n}\mathbb{E}[\mathsf{J}_k]\\
		&= \lim_{n\rightarrow \infty}  \frac{1}{n} \sum_{k=1}^{n} \sum_{\ell \in \mathcal{L}} P(\mathsf{L}_k = \ell)\mathbb{E}[\mathsf{J}_k|\mathsf{L}_k = \ell]\\
		&\stackrel{(b)}{=} \sum_{\ell \in \mathcal{L}} \bar{\epsilon}h_b(\delta_\ell) \lim_{n\rightarrow \infty}  \frac{1}{n} \sum_{k=1}^{n} P(\mathsf{L}_k = \ell)\\
		&\stackrel{(c)}{=} \sum_{\ell \in \mathcal{L}} \bar{\epsilon}h_b(\delta_\ell)\pi(\ell)\\
		&= \sum_{i=0}^{d}\bar{\epsilon}h_b(\delta_i)\pi(\mathcal{L}_i)\\
		&\stackrel{(d)}{=} R(\vec{\delta}),
	\end{align*}
where 
\begin{enumerate}[label = (\alph*)]
	\item holds from equation \eqref{eq:Jk},
	\item follows from Lemma \ref{lem:gain} and by exchanging the order of the summations,
	\item follows from the definition of stationary probability and from the fact that the random process $(\mathsf{L}_n:n\in \mathbb{N})$ is a positive recurrent, irreducible, aperiodic Markov chain (this follows from the irreducibility and aperiodicity of the graph in Figure \ref{fig:QUB}), and
	\item holds by an explicit calculation of the stationary probabilities of the Markov process $(\mathsf{L}_n:n\in \mathbb{N})$, which are given by:
	\[
	\pi(\ell)=
	\begin{cases}
		\frac{\epsilon^i}{\sum_{j=0}^{d}\epsilon^j + d\bar{\epsilon}\left(\sum_{j=0}^{d}\epsilon^j\delta_j\right)},\ \ell=\mathcal{L}_i,\\
		\frac{d\bar{\epsilon}\left(\sum_{j=0}^{d}\epsilon^j\delta_j\right)}{\sum_{j=0}^{d}\epsilon^j + d\bar{\epsilon}\left(\sum_{j=0}^{d}\epsilon^j\delta_j\right)},\ \ell=\hat{\mathcal{L}}.
	\end{cases}
	\]
\end{enumerate}
Therefore, maximizing over all allowed parameters $\delta_0,\ldots,\delta_d$, it follows that our coding scheme achieves a rate $R = \max_{\vec{\delta}\in \Delta_d} R(\vec{\delta})$.
\end{proof}

We now provide an upper bound on the feedback capacity of the $(d,\infty)$-RLL input-constrained BEC, using Theorem \ref{thm:UB}.
\begin{lemma}
	\label{lem:ub}
	For any $\epsilon \in [0,1]$, the feedback capacity of the $(d,\infty)$-RLL input-constrained BEC is bounded as:
	\[
	C^{\text{fb}}_{(d,\infty)}\leq \max_{\vec{\delta}\in \Delta_d} R(\vec{\delta}).
	\]
\end{lemma}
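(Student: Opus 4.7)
The plan is to apply Theorem~\ref{thm:UB} with the $Q$-graph taken to be the finite-state machine of Figure~\ref{fig:QUB}, whose vertex set $\mathscr{Q}=\{Q_0,\ldots,Q_d,\hat{Q}_0,\ldots,\hat{Q}_{d-1}\}$ is indexed by the labellings used in the coding scheme of Section~\ref{sec:fbcode}, and whose edges are labelled by the channel outputs in $\{0,?,1\}$. This graph is irreducible by direct inspection of its transitions, so it is an admissible choice in Theorem~\ref{thm:UB}, yielding $C^{\text{fb}}_{(d,\infty)}\le \sup_{\{P(x|s,q)\}\in\Omega}I(X;Y|Q)$.

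First I would construct the $(S,Q)$-graph on $\mathscr{S}\times\mathscr{Q}$, where $\mathscr{S}=\{0,\ldots,d\}$ is the canonical state space of the $(d,\infty)$-RLL constraint, and identify its (unique) closed communicating class. Feasibility of the RLL constraint forces $P(X=1|s,q)=0$ whenever $s<d$, and the $\hat{Q}_j$ vertices are reached in the closed class only as ``forced-zero'' states (the previous $1$ seen by the decoder was transmitted by the encoder, so the RLL constraint again compels $X=0$). The only genuinely free parameters are therefore $\alpha_i:=P(X=1\mid s=d,\,Q_i)$ for $i=0,\ldots,d$, each lying in $[0,1]$.

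Since the channel is a BEC, $H(Y|X,Q)=h_b(\epsilon)$ is independent of $(X,Q)$, and a one-line computation gives $I(X;Y|Q=q)=\bar{\epsilon}\,h_b\bigl(P(X=1|Q=q)\bigr)$ for every $q$. Hence
\[
I(X;Y|Q)\;=\;\bar{\epsilon}\sum_{q\in\mathscr{Q}}\pi(q)\,h_b\bigl(P(X=1|Q=q)\bigr),
\]
where $\pi(\cdot)$ is the $Q$-marginal of the stationary distribution of the chain on $\mathscr{S}\times\mathscr{Q}$. I would then solve the balance equations explicitly to evaluate $\pi(Q_i)$, $\pi(\hat{Q}_j)$, and the conditional probability $P(X=1|Q_i)=P(s=d\mid Q_i)\,\alpha_i$ in closed form. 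Setting $\delta_i:=P(X=1|Q_i)$, these quantities should reproduce, after simplification, the stationary masses $\pi(\mathcal{L}_i)$ and $\pi(\hat{\mathcal{L}})$ from the proof of Lemma~\ref{lem:rate}, so that $I(X;Y|Q)=R(\vec{\delta})$, whence the bound follows upon taking supremum over admissible input distributions.

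The main obstacle will be the stationary-distribution calculation on the $(S,Q)$-graph and the verification that the effective parameter vector $\vec{\delta}$ (as opposed to the free parameters $\vec{\alpha}\in[0,1]^{d+1}$) lies in the simplex $\Delta_d$, i.e.\ that the constraint $\sum_i\delta_i\le 1$ is inherited structurally from the Markov chain rather than needing to be imposed by hand — intuitively, the $\delta_i$'s correspond to mutually-exclusive events of emitting a $1$ within a single stationary cycle that visits $s=d$, which should yield the simplex bound automatically. Along the way I would also verify the aperiodicity hypothesis of Theorem~\ref{thm:UB} for the initial pair $(s_0,q_0)=(d,Q_0)$, which should follow from the presence of short cycles in the closed communicating class of Figure~\ref{fig:QUB} as soon as some $\alpha_i\in(0,1)$.
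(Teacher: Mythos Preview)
Your proposal is correct and follows essentially the same route as the paper: apply Theorem~\ref{thm:UB} with the $Q$-graph of Figure~\ref{fig:QUB}, use the BEC identity $I(X;Y|Q)=\bar{\epsilon}\sum_q\pi(q)h_b\bigl(P(X=1|q)\bigr)$, set $\delta_i:=P(X=1|Q_i)$, and recognize that the $Q$-marginal stationary law coincides with the one computed in Lemma~\ref{lem:rate}, yielding $I(X;Y|Q)=R(\vec{\delta})$. The only place where the paper is more concrete than your sketch is exactly the point you flagged as the main obstacle: to establish $\sum_i\delta_i\le 1$, the paper exploits the $(S,Q)$-graph structure via the identity $\epsilon^{\,d-i}P(X{=}1,S{=}d,Q_i)=P(S{=}i,Q_d)$ for $i<d$, which converts $\sum_i\delta_i$ into $\pi(Q_d)^{-1}\sum_i P(S{=}i,Q_d)\le 1$ --- this is precisely the rigorous form of your ``mutually-exclusive events within a single stationary cycle'' intuition.
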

\begin{proof}
	We simply apply Theorem \ref{thm:UB} to the $Q$-graph that is the FSM in Figure \ref{fig:QUB}. The transition probabilities $P(Y=y|q)$ are the same as those in the proof of Lemma \ref{lem:rate}, for $y\in \{0,?,1\}$ and $q \in \{Q_0,Q_1,\ldots,Q_d,\hat{Q}_0,\hat{Q}_1,\ldots,\hat{Q}_{d-1}\}$. Note that the definition of the transition probabilities implies that 
	\[
	P(X=1|q) = 
	\begin{cases}
		\delta_i, \ \text{if $q = Q_i$, for some $i$},\\
		0, \ \text{otherwise}.
	\end{cases}
	\]
	 We then have that
	 \begin{align*}
	 	I(X;Y|Q) &\stackrel{(a)}{=} H(Y|Q) - H(Y|X)\\
	 	&= H(Y|Q) - h_b(\epsilon)\\
	 	&\stackrel{(b)}{=} \bar{\epsilon}H(X|Q)+h_b(\epsilon) - h_b(\epsilon)\\
	 	&= \sum_{\ell \in \mathcal{L}} \bar{\epsilon}h_b(\delta_\ell)\pi(\ell)\\
	 	&= \sum_{i=0}^{d}\bar{\epsilon}h_b(\delta_i)\pi(\mathcal{L}_i)\\
	 	&= R(\vec{\delta}),
	 \end{align*}
 where
 \begin{enumerate}[label = (\alph*)]
 	\item holds due to the memorylessness of the BEC, and
 	\item follows from the simple identity that $H(a\bar{c},\bar{a}\bar{c},c) = h_b(c)+\bar{c}h_b(a)$, for all $a,c \in [0,1]$.
 \end{enumerate}
Hence, all that remains to be shown for the proof to be complete is that 
\[
\sup_{\{P(x|s,q)\}} R(\vec{\delta}) = \max_{\sum_{i=0}^{d}\delta_i \leq 1}R(\vec{\delta}).
\]
This is true since for any valid input distribution $\{P(x|s,q)\}$ on the $(S,Q)$-graph corresponding to the $Q$-graph in Figure \ref{fig:QUB}, it holds that
\begin{align*}
	\sum_{i=0}^{d}\delta_i = \sum_{i=0}^{d}P(X=1|Q_i)
	&= \sum_{i=0}^{d}\frac{P(X=1,Q_i)}{\pi(\mathcal{L}_i)}\\
	&{=} \sum_{i=0}^{d}\frac{P(X=1,Q_i)}{\epsilon^i \pi(\mathcal{L}_0)}\\
	&= \frac{1}{ \pi(\mathcal{L}_d)}\sum_{i=0}^{d}\epsilon^{d-i}P(X=1,Q_i).	
\end{align*}
Now,
\begin{align*}
    \sum_{i=0}^{d}\delta_i &= \frac{1}{ \pi(\mathcal{L}_d)}\sum_{i=0}^{d}\epsilon^{d-i}P(X=1,Q_i)\\
    &= \frac{1}{ \pi(\mathcal{L}_d)}\sum_{i=0}^{d}\epsilon^{d-i}P(X=1,S=d,Q_i)\\
    &\leq \frac{1}{ \pi(\mathcal{L}_d)}\left(\sum_{i=0}^{d-1}\epsilon^{d-i}P(X=1,S=d,Q_i)+P(S=d,Q_d)\right)\\
    &= \frac{1}{ \pi(\mathcal{L}_d)}\left(\sum_{i=0}^{d-1}P(S=i,Q_d)+P(S=d,Q_d)\right)\\
    &=1,
\end{align*}
where the penultimate equality holds from the structure of the $(S,Q)$-graph.
\end{proof}
\begin{remark}
	We note that for the case when $d=2$, the authors in \cite{0k} provide the optimal $Q$-graph of Figure \ref{fig:QUB} for evaluating an upper bound on the feedback capacity. We have shown here that their upper bound is tight.
\end{remark}
Lemmas \ref{lem:rate} and \ref{lem:ub}, taken together, prove Theorem \ref{thm:main}. We thus obtain an exact characterization of the feedback capacity, in addition to providing an explicit coding scheme.
\section{Conclusion}
\label{sec:conclusion}
This work proposed explicit, deterministic coding schemes for the binary erasure channel (BEC) with a $(d,\infty)$-runlength limited (RLL) input constraint with feedback. In particular, a zero-error, labelling-based feedback capacity-achieving coding scheme was demonstrated, thereby allowing for the exact computation of the feedback capacity of the class of channels under consideration---a problem that was left open. It was also shown that feedback indeed increases the capacity of this class of input-constrained memoryless channels.


An important open question in the setting with feedback is whether the feedback capacity of general input-constrained discrete memoryless channels (DMCs) is achievable using input distributions of finite memory. As a first step, one could try to extend the coding schemes in the literature (including the one presented in this paper) to the BEC with a $(d,k)$-RLL input constraint, for $d\neq 0$ and $k\neq \infty$.


\appendices
\section{Proof of Proposition \ref{prop:simplify}}
Proposition \ref{prop:simplify} simplifies the maximizing values $\vec{\delta}^*:=(\delta_0^*,\ldots,\delta_d^*)$ in the two cases when the maximum is attained at an interior point of $\Delta_d$ and when it is attained on the boundary.

First, we characterize the stationary points of $R(\vec{\delta})$. We define
\begin{align*}
	N(\vec{\delta})&:=\bar{\epsilon}\left(\sum_{i=0}^{d}\epsilon^i h_b(\delta_i)\right),\ \text{and}\\
	D(\vec{\delta})&:=\sum_{i=0}^{d}\epsilon^i + d\bar{\epsilon}\left(\sum_{i=0}^{d}\epsilon^i\delta_i\right),
\end{align*}
to be the numerator and the denominator of $R(\vec{\delta})$, respectively. Note that for any $i\in [0:d]$,
\begin{align}
\frac{\partial R(\vec{\delta})}{\partial \delta_i}\Biggr|_{\vec{\delta} = \vec{\tilde{\delta}}} = 0 &\implies \bar{\epsilon}\epsilon^i\log_2\left(\frac{1-{\tilde{\delta}_i}}{\tilde{\delta}_i}\right)\cdot D(\vec{\tilde{\delta}}) = d\bar{\epsilon}\epsilon^i\cdot N(\vec{\tilde{\delta}}) \notag\\
&\implies \frac{N(\vec{\tilde{\delta}})}{D(\vec{\tilde{\delta}})} = \frac{1}{d}\log_2\left(\frac{1-\tilde{\delta_i}}{\tilde{\delta}_i}\right) \label{eq:maximizer}.
\end{align}
Therefore, from \eqref{eq:maximizer}, we get that
\[
\frac{\partial R(\vec{\delta})}{\partial \delta_i}\Biggr|_{\vec{\delta} = \vec{\tilde{\delta}}} = 0, \forall i\ \implies \tilde{\delta}_0 = \ldots = \tilde{\delta}_d.
\]
Thus, when the maximum in \eqref{eq:fbcap} is attained at an interior point, it holds that $\delta_0^* = \ldots =\delta_d^* =\delta^*$, thereby showing that in the first case, the feedback capacity
\[
C_{(d,\infty)}^{\text{fb}}(\epsilon) = \max_{\delta\in [0,\frac{1}{d+1}]} \frac{h_b(\delta)}{d\delta + \frac{1}{1-\epsilon}}.
\]
If the maximum is attained at a boundary point of $\Delta_d$, we wish to show that it holds that $\sum_{i=0}^{d}\delta_i^* = 1$. To this end, we first the define the following non-linear optimization problem with affine constraints:
\begin{align}
	&\text{\textbf{maximize}}\ R(\delta_0,\ldots,\delta_d) \notag\\
	&\text{\textbf{subj. to}}\ g_0(\vec{\delta}) := -\delta_0\leq 0,\ldots, g_d(\vec{\delta}) := -\delta_d\leq 0, \notag \\
	&\ \ \ \ \ \ \ \ \ \ \tilde{g}_0(\vec{\delta}) := \delta_0-1\leq 0,\ldots, \tilde{g}_d(\vec{\delta}) := \delta_d-1\leq 0, \notag\\
	&\ \ \ \ \ \ \ \ \ \ \ \ \ \ \ \ \ \ \hat{g}(\vec{\delta}) := \delta_0+\ldots+\delta_d - 1\leq 0. \label{eq:opt1}
\end{align}

Note that the objective function in \eqref{eq:opt1} and the constraint functions are all continuously differentiable in $[0,1]^{d+1}$ and the constraints are affine functions. Therefore, it holds from the necessity of the Karush-Kuhn-Tucker (KKT) conditions being satisfied, that there exist constants $\{\mu_i\}_{i=0}^{d}$, $\{\tilde{\mu}_i\}_{i=0}^{d}$, and $\hat{\mu}$, with $\mu_i, \tilde{\mu}_i \geq 0, \ \forall i$, and $\hat{\mu}\geq 0$, such that:
\begin{align}
	\label{eq:opt2}
\nabla R(\vec{\delta}) \Big|_{\vec{\delta} = \vec{\delta}^*} = \left(\sum_{i=0}^{d}\mu_i\nabla g_i(\vec{\delta})+\sum_{i=0}^{d}\tilde{\mu}_i\nabla \tilde{g}_i(\vec{\delta})+\hat{\mu}\nabla \hat{g}(\vec{\delta})\right)\Bigg|_{\vec{\delta} = \vec{\delta}^*}.
\end{align}
Suppose that the maximum is attained at a boundary point with $\delta_j^* = 0$, for some $j\in [0:d]$. Following reasoning similar to that in Lemma 13 in Appendix A of \cite{0k}, we note that since $\delta_j^* = 0$, we do not need to worry about the constraint that $\tilde{g}_j(\vec{\delta}) \leq 0$. Equation \eqref{eq:opt2} then gives us:
\[
\frac{\partial R(\vec{\delta})}{\partial \delta_j}\Biggr|_{\delta_j = 0} = -\mu_j + \hat{\mu}.
\]
However, we note that 
\[
\frac{\partial R(\vec{\delta})}{\partial \delta_j}
=\frac{\bar{\epsilon}\epsilon^j\log_2\left(\frac{1-{{\delta}_j}}{{\delta}_j}\right)\cdot D(\vec{\delta}) - d\bar{\epsilon}\epsilon^j\cdot N(\vec{\delta})}{\left(D(\vec{\delta})\right)^2},
\]
which tends to $+\infty$ as $\delta_j \to 0^+$. Therefore, it must be that $\hat{\mu}=+\infty$. 

Now, again, since the KKT conditions are necessary conditions for optimality in our maximization problem, from the complementary slackness condition, we see that:
\begin{align}
	\label{eq:opt3}
\hat{\mu}\hat{g}(\vec{\delta}^*)+\sum_{i=0}^{d}\mu_ig_i(\vec{\delta}^*)+\sum_{i=0}^{d}\tilde{\mu}_i\tilde{g}_i(\vec{\delta}^*) = 0.
\end{align}
From the fact that $\hat{\mu} = +\infty$ when the maximum is attained at a boundary point, it follows from equation \eqref{eq:opt3} that $\hat{g}(\vec{\delta}^*) = 0$, or, that $\sum_{i=0}^{d}\delta^*_i = 1$ when $\delta_j^*=0$ for some $j$.


\ifCLASSOPTIONcaptionsoff
  \newpage
\fi



%
\bibliographystyle{IEEEtran}
{\footnotesize
	\bibliography{references}}

%

%





\end{document}